\newtheorem{theorem}{Lemma}
\newtheorem{proof}{Proof}
\soulregister\ref{7}  
\soulregister\cite{7} 
\newlength\myindent
\renewcommand*\rmdefault{ptm}
\let\conjugatet\overline
\begin{document}
\rmfamily

\title{Heuristic Recurrent Algorithms for Photonic Ising Machines}
\author{Charles Roques-Carmes}
\affiliation{Research Laboratory of Electronics, Massachusetts Institute of Technology, 50 Vassar Street, Cambridge MA 02139, USA}
\affiliation{Department of Electrical Engineering and Computer Science, Massachusetts Institute of Technology, 77 Massachusetts Avenue, Cambridge, MA 02139, USA}
\email{Corresponding authors e-mail: chrc@mit.edu, ycshen@mit.edu}

\author{Yichen Shen} 
\affiliation{Department of Physics, Massachusetts Institute of Technology, 77 Massachusetts Avenue, Cambridge, MA 02139, USA}

\author{Cristian Zanoci} 
\affiliation{Department of Physics, Massachusetts Institute of Technology, 77 Massachusetts Avenue, Cambridge, MA 02139, USA}

\author{Mihika Prabhu}  
\affiliation{Research Laboratory of Electronics, Massachusetts Institute of Technology, 50 Vassar Street, Cambridge MA 02139, USA}
\affiliation{Department of Electrical Engineering and Computer Science, Massachusetts Institute of Technology, 77 Massachusetts Avenue, Cambridge, MA 02139, USA}

\author{Fadi Atieh} 
\affiliation{Department of Electrical Engineering and Computer Science, Massachusetts Institute of Technology, 77 Massachusetts Avenue, Cambridge, MA 02139, USA}
\affiliation{Department of Physics, Massachusetts Institute of Technology, 77 Massachusetts Avenue, Cambridge, MA 02139, USA}

\author{Li Jing} 
\affiliation{Department of Physics, Massachusetts Institute of Technology, 77 Massachusetts Avenue, Cambridge, MA 02139, USA}

\author{Tena Dub\v{c}ek} 
\affiliation{Department of Physics, Massachusetts Institute of Technology, 77 Massachusetts Avenue, Cambridge, MA 02139, USA}

\author{Chenkai Mao} 
\affiliation{Department of Electrical Engineering and Computer Science, Massachusetts Institute of Technology, 77 Massachusetts Avenue, Cambridge, MA 02139, USA}
\affiliation{Department of Physics, Massachusetts Institute of Technology, 77 Massachusetts Avenue, Cambridge, MA 02139, USA}

\author{Miles R. Johnson} 
\affiliation{Department of Mathematics, Massachusetts Institute of Technology, 77 Massachusetts Avenue, Cambridge, MA 02139, USA}

\author{Vladimir \v{C}eperi\'{c}} 
\affiliation{Department of Physics, Massachusetts Institute of Technology, 77 Massachusetts Avenue, Cambridge, MA 02139, USA}

\author{John D. Joannopoulos} 
\affiliation{Department of Physics, Massachusetts Institute of Technology, 77 Massachusetts Avenue, Cambridge, MA 02139, USA}
\affiliation{Institute for Soldier Nanotechnologies, 500 Technology Square, Cambridge, MA 02139, USA}

\author{Dirk Englund} 
\affiliation{Research Laboratory of Electronics, Massachusetts Institute of Technology, 50 Vassar Street, Cambridge MA 02139, USA}
\affiliation{Department of Electrical Engineering and Computer Science, Massachusetts Institute of Technology, 77 Massachusetts Avenue, Cambridge, MA 02139, USA}

\author{Marin Solja\v{c}i\'{c}}
\affiliation{Research Laboratory of Electronics, Massachusetts Institute of Technology, 50 Vassar Street, Cambridge MA 02139, USA}
\affiliation{Department of Physics, Massachusetts Institute of Technology, 77 Massachusetts Avenue, Cambridge, MA 02139, USA}

\begin{abstract}
\textbf{The inability of conventional electronic architectures to efficiently solve large combinatorial problems motivates the development of novel computational hardware. There has been much effort recently toward developing novel, application-specific hardware, across many different fields of engineering, such as integrated circuits, memristors, and photonics. However, unleashing the true potential of such novel architectures requires the development of featured algorithms which optimally exploit their fundamental properties. We here present the Photonic Recurrent Ising Sampler (PRIS), a heuristic method tailored for parallel architectures that allows for fast and efficient sampling from distributions of combinatorially hard Ising problems. Since the PRIS relies essentially on vector-to-fixed matrix multiplications, we suggest the implementation of the PRIS in photonic parallel networks, which realize these operations at an unprecedented speed. The PRIS provides sample solutions to the ground state of arbitrary Ising models, by converging in probability to their associated Gibbs distribution. By running the PRIS at various noise levels, we probe the critical behavior of universality classes and their critical exponents. In addition to the attractive features of photonic networks, the PRIS relies on intrinsic dynamic noise and eigenvalue dropout to find ground states more efficiently. Our work suggests speedups in heuristic methods via photonic implementations of the PRIS. We also hint at a broader class of (meta)heuristic algorithms derived from the PRIS, such as combined simulated annealing on the noise and eigenvalue dropout levels. Our algorithm can also be implemented in a competitive manner on fast parallel electronic hardware, such as FPGAs and ASICs.}
\end{abstract}

\maketitle

\section{Introduction}

Heuristic methods -- probabilistic algorithms with stochastic components -- are a cornerstone of both numerical methods in statistical physics \cite{Landau2009} and NP-Hard optimization \cite{Hromkovic2013}. Broad classes of problems in statistical physics, such as growth patterns in clusters \cite{Kardar1986}, percolation \cite{Isichenko1992}, heterogeneity in lipid membranes \cite{Honerkamp-Smith2009}, and complex networks \cite{Albert2002}, can be described by heuristic methods. These methods have proven instrumental for predicting phase transitions and the critical exponents of various universality classes -- families of physical systems exhibiting similar scaling properties near their critical temperature \cite{Landau2009}. These heuristic algorithms have become popular, as they typically outperform exact algorithms at solving real-world problems\cite{Glover2006}. Heuristic methods are usually tailored for conventional electronic hardware; however, a number of optical machines have recently been shown to solve the well-known Ising \cite{Wang2013, McMahon2016} and Traveling Salesman problems \cite{Wu2014, Vazquez2018}. For computationally demanding problems, these methods can benefit from parallelization speedups \cite{Landau2009, Macready1996}, but the determination of an efficient parallelization approach is highly problem-specific \cite{Landau2009}.

Half a century before the contemporary Machine Learning Renaissance \cite{LeCun2015}, the Little \cite{Little1974} and then the Hopfield \cite{Hopfield1982, Hopfield} networks were considered as early architectures of recurrent neural networks (RNN). The latter was suggested as an algorithm to solve combinatorially hard problems, as it was shown to deterministically converge to local minima of arbitrary quadratic Hamiltonians of the form
\begin{equation}
\label{ising}
H^{(K)} = -\frac{1}{2} \sum_{1\leq i,j \leq N} \sigma_i K_{ij}  \sigma_j,
\end{equation}
which is the most general form of an Ising Hamiltonian in the absence of an external magnetic field \cite{Ising1925}. In \cref{ising}, we equivalently denote the set of spins as $\sigma \in \{ -1, 1\}^N$ or $S \in \{ 0, 1\}^N$ (with $\sigma = 2S-1$), and $K$ is a $N \times N$ real symmetric matrix.

In the context of physics, Ising models describe the interaction of many particles in terms of the coupling matrix $K$. These systems are observed in a particular spin configuration $\sigma$ with a probability given by the Gibbs distribution $p(\sigma) \propto \exp (-\beta H^{(K)}(\sigma))$, where $\beta = 1/ (k_B T)$, with $k_B$ the Boltzmann constant and $T$ the temperature. At low temperature, when $\beta \rightarrow \infty$, the Gibbs probability of observing the system in its ground state approaches 1, thus naturally minimizing the quadratic function in Equation (\ref{ising}). As similar optimization problems are often encountered in computer science\cite{Glover2006, Hromkovic2013}, a natural idea is to engineer physical systems with dynamics governed by an equivalent Hamiltonian. Then, by sampling the physical system, one can generate candidate solutions to the optimization problem. This analogy between statistical physics and computer science has nurtured a great variety of concepts in both fields \cite{Mezard2009}, for instance, the analogy between neural networks and spin glasses \cite{Hopfield1982,Amit1985}.

Many complex systems can be formulated using the Ising model\cite{Pelissetto2002} --- such as ferromagnets \cite{Ising1925, Onsager1944}, liquid-vapor transitions \cite{Brilliantov1998}, lipid membranes \cite{Honerkamp-Smith2009}, brain functions \cite{Amit1989}, random photonics\cite{Ghofraniha2015}, and strongly-interacting systems in quantum chromodynamics \cite{Halasz1998}. From the perspective of optimization, finding the spin distribution minimizing $H^{(K)}$ for an arbitrary matrix $K$ belongs to the class of NP-hard problems \cite{Barahona1982}.

Hopfield networks deterministically converge to a \textit{local} minimum, thus making it impossible to scale such networks to \textit{deterministically} find the \textit{global} minimum \cite{Bruck1990} --- thus jeopardizing any electronic \cite{Hopfield} or optical \cite{Farhat1985} implementation of these algorithms. As a result, these early RNN architectures were soon superseded by heuristic (such as Metropolis-Hastings (MH)) and metaheuristic methods (such as simulated annealing (SA) \cite{Kirkpatrick1983}, parallel tempering\cite{Earl2005}, genetic algorithms\cite{Davis1991}, Tabu search\cite{Glover1998} and local-search-based algorithms\cite{Boros2007}), usually tailored for conventional electronic hardware. Even still, heuristic methods struggle to solve large problems, and could benefit from nanophotonic hardware demonstrating parallel, low-energy, and high-speed computations \cite{Shen2017a, Silva2014, Koenderink2015}.

In this Letter, we propose a fast and efficient heuristic method for photonic analog computing platforms, relying essentially on iterative matrix multiplications. Our heuristic approach also takes advantage of optical passivity and dynamic noise to find ground states of arbitrary Ising problems and probe their critical behaviors, yielding accurate predictions of critical exponents of the universality classes of conventional Ising models. Our algorithm presents attractive scaling properties when benchmarked against conventional algorithms, such as MH. Our findings suggest a novel approach to heuristic methods for efficient optimization and sampling by leveraging the potential of matrix-to-vector accelerators, such as parallel photonic networks. \cite{Shen2017a}. Here, we propose a photonic implementation of a passive RNN, which models the arbitrary Ising-type Hamiltonian in \cref{ising}.

\section{Results}

The proposed architecture of our photonic network is shown in Figure \ref{fig:1}. This photonic network can map arbitrary Ising Hamiltonians described by \cref{ising}, with $K_{ii} = 0$ (as diagonal terms only contribute to a global offset of the Hamiltonian, see Supplementary Note 1). In the following, we will refer to the eigenvalue decomposition of $K$ as $K = U D U^\dagger$, where $U$ is a unitary matrix, $U^\dagger$ its transpose conjugate, and $D$ a real-valued diagonal matrix. The spin state at time step $t$, encoded in the phase and amplitude of $N$ parallel photonic signals $S^{(t)} \in \{ 0,1 \}^N$, first goes through a linear symmetric transformation decomposed in its eigenvalue form $2J = U \text{Sq}_\alpha (D) U^\dagger$, where $\text{Sq}_\alpha (D)$ is a diagonal matrix derived from $D$, whose design will be discussed in the next paragraphs. The signal is then fed into nonlinear optoelectronic domain, where it is perturbed by a Gaussian distribution of standard deviation $\phi$ (simulating noise present in the photonic implementation) and is imparted a nonlinear threshold function $\text{Th}_\theta$ ($\text{Th}_\theta (x) = 1$ if $x > \theta$, 0 otherwise). The signal is then recurrently fed back to the linear photonic domain, and the process repeats. The static unit transformation between two time steps $t$ and $t+1$ of this RNN can be summarized as 
\begin{equation}
\begin{aligned}
\label{algo}
X^{(t)} &\sim \mathcal{N}(2JS^{(t)} | \phi), \\
S^{(t+1)} &= \text{Th}_\theta(X^{(t)}) 
\end{aligned}
\end{equation}
where $\mathcal{N}(x | \phi)$ denotes a Gaussian distribution of mean $x$ and standard deviation $\phi$. We call this algorithm, which is tailored for a photonic implementation, the Photonic Recurrent Ising Sampler (PRIS). The detailed choice of algorithm parameters is described in the Supplementary Note 2.

\begin{figure}
\includegraphics[scale=0.4]{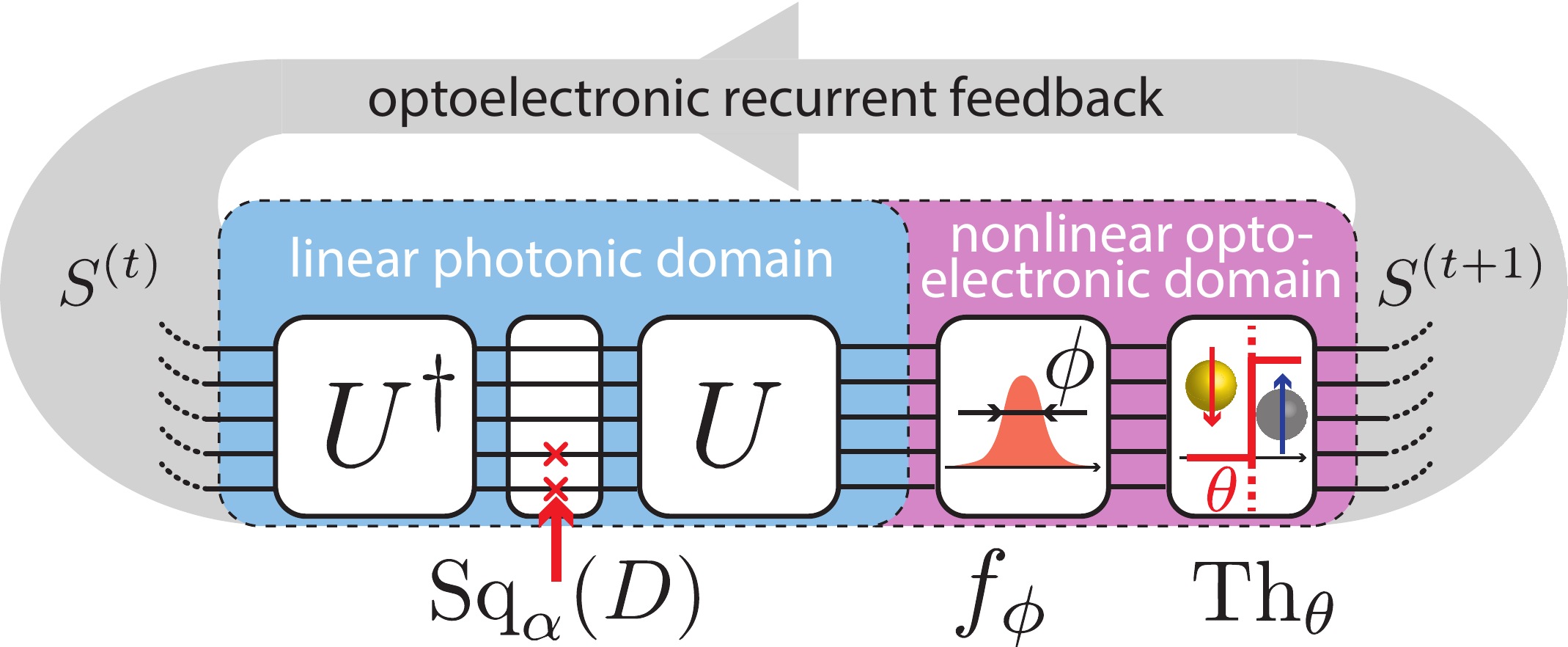}
\caption{\textbf{Operation principle of the PRIS.} A photonic analog signal, encoding the current spin state $S^{(t)}$, goes through transformations in linear photonic and nonlinear optoelectronic domains. The result of this transformation $S^{(t+1)}$ is recurrently fed back to the input of this passive photonic system.}
\label{fig:1}
\end{figure}

This simple recurrent loop can be readily implemented in the photonic domain. For example, the linear photonic interference unit can be realized with MZI networks \cite{Shen2017a, Carolan2015, Reck1994, Clements2016}, diffractive optics \cite{Lin2018, Gruber2000}, ring resonator filter banks \cite{Tait2014, Tait, Vandoorne2014}, and free space lens-SLM-lens systems\cite{Saade2016, Pierangeli2018}; the diagonal matrix multiplication $\text{Sq}_\alpha (D)$ can be implemented with an  electro-optical absorber, a modulator or a single MZI \cite{Cheng2014,Shen2017a, Bao2011a}; the nonlinear optoelectronic unit can be implemented with an optical nonlinearity \cite{Selden1967, Cheng2014,Soljacic2002,Schirmer1997,Bao2011a}, or analog/digital electronics \cite{Horowitz1990, Boser1991, Misra2010, Vrtaric2013}, for instance by converting the optical output to an analog electronic signal, and using this electronic signal to modulate the input \cite{Williamson2019}. The implementation of the PRIS on several photonic architectures and the influence of heterogeneities, phase bit precision, and signal to noise ratio on scaling properties are discussed in the Supplementary Note 5. In the following, we will describe the properties of an ideal PRIS and how design imperfections may affect its performance.

\begin{figure*}
\includegraphics[scale=0.6]{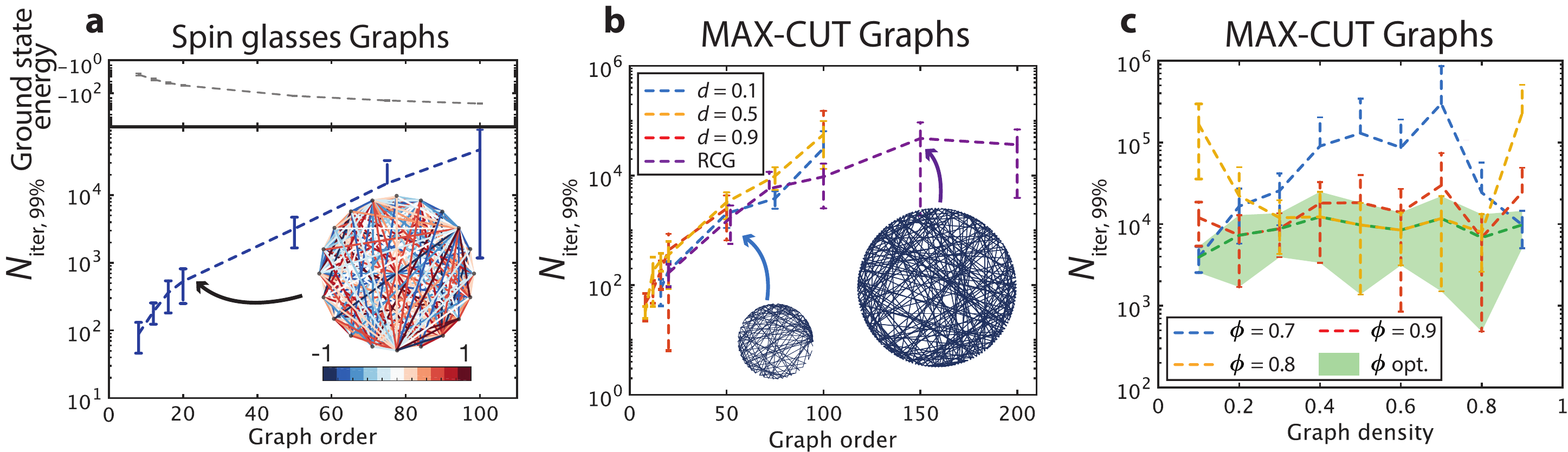}
\caption{\textbf{Scaling performance of the PRIS.}  (\textbf{a}, top) Ground state energy versus graph order of random spin glasses. A sample graph is shown as an inset in (\textbf{a}, bottom): a fully-connected spin glass with uniformly-distributed continuous couplings in $[-1,1]$. $N_{\text{iter}, 99\%}$ versus graph size for spin glasses (\textbf{a}, bottom) and MAX-CUT graphs (\textbf{b}). \textbf{(c)} $N_{\text{iter}, 99\%}$ versus graph density for MAX-CUT graphs and $N = 75$. The graph density is defined as $d = 2 |E| / (N(N-1))$, $|E|$ being the number of undirected edges. RCG denotes Random Cubic Graphs, for which $|E| = 3N/2$. Ground states are determined with the exact solver BiqMac \cite{Rendl2010} (see Methods section). In this analysis, we set $\alpha = 0$, and for each set of density and graph order we ran 10 graph instances 1,000 times. The number of iterations to find the ground state is measured for each run and $N_{\text{iter}, q}$ is defined as the $q$-th quantile of the measured distribution.}
\label{fig:2}
\end{figure*}

\subsection{General theory of the PRIS dynamics}
The long-time dynamics of the PRIS is described by an effective Hamiltonian $H_L$ (see Refs. \cite{Peretto1984, Amit1985} and Supplementary Note 2). This effective Hamiltonian can be computed by performing the following steps. First, calculate the transition probability of a single spin from Equation (\ref{algo}). Then, the transition probability from an initial spin state $S^{(t)}$ to the next step $S^{(t+1)}$ can be written as
\begin{eqnarray}
W^{(0)} \left(S^{(t+1)}|S^{(t)} \right) &=& \frac{e^{-\beta H^0 \left(S^{(t+1)}|S^{(t)} \right)}}{\sum_S e^{-\beta H^0 \left(S|S^{(t)}\right)}}, \label{intermed1} \\
H^0 \left(S|S'\right) &=& - \sum_{1\leq i,j \leq N} J_{ij} \sigma_i \left( S \right) \sigma_j \left( S' \right), \label{intermed2}
\end{eqnarray}
where $S, S'$ denote arbitrary spin configurations. Let us emphasize that, unlike $H^{(K)}(S)$, the transition Hamiltonian $H^{(0)} \left(S|S'\right) $ is a function of two spin distributions $S$ and $S'$. Here, $\beta = 1/(k\phi)$ is analogous to the inverse temperature from statistical mechanics, where $k$ is a constant, only depending on the noise distribution (see Supplementary Table 1). To obtain equations (\ref{intermed1}, \ref{intermed2}), we approximated the single spin transition probability by a rescaled sigmoid function and have enforced the condition $\theta_i = \sum_j J_{ij}$. In the Supplementary Note 2, we investigate the more general case of arbitrary threshold vectors $\theta_i$ and discuss the influence of the noise distribution. 

One can easily verify that this transition probability obeys the triangular condition (or detailed balance condition) if $J$ is symmetric $J_{ij} = J_{ji}$. From there, an effective Hamiltonian $H_L$ can be deduced following the procedure described by Peretto \cite{Peretto1984} for distributions verifying the detailed balance condition. The effective Hamiltonian $H_L$ can be expanded, in the large noise approximation ($\phi \gg 1$, $\beta \ll 1$), into $H_2$:
\color{black}
\begin{eqnarray}
H_L &=& -\frac{1}{\beta} \sum_i \log \cosh \big( \beta \sum_j J_{ij} \sigma_j \big), \label{ham_little}  \\
H_2 &=& -\frac{\beta}{2} \sum_{1 \leq i,j \leq N} \sigma_i [J^2]_{ij}  \sigma_j. \label{ham_large_noise} 
\end{eqnarray}
Examining Equation (\ref{ham_large_noise}), we can deduce a mapping of the PRIS to the general Ising model shown in \cref{ising} since $H_2 = \beta H^{(J^2)}$. 
	We set the PRIS matrix $J$ to be a modified square-root of the Ising matrix $K$ by imposing the following condition on the PRIS 
\begin{equation}
\text{Sq}_\alpha (D) = 2 \text{Re} ( \sqrt{D + \alpha \Delta} ).
\label{sq_cond}
\end{equation}
We add a diagonal offset term $\alpha \Delta$ to the eigenvalue matrix $D$, in order to parametrize the number of eigenvalues remaining after taking the real part of the square root. Since lower eigenvalues tend to increase the energy, they can be dropped out so that the algorithm spans the eigenspace associated with higher eigenvalues. We chose to parametrize this offset as follows: $\alpha \in \mathbb{R}$ is called the eigenvalue dropout level, a hyperparameter to select the number of eigenvalues remaining from the original coupling matrix $K$, and $\Delta > 0$ is a diagonal offset matrix. For instance, $\Delta$ can be defined as the sum of the off-diagonal term of the Ising coupling matrix $\Delta_{ii} = \sum_{j \neq i} |K_{ij}|$. The addition of $\Delta$ only results in a global offset on the Hamiltonian. The purpose of the $\Delta$ offset is to make the matrix in the square root diagonally dominant, thus  symmetric positive definite, when $\alpha$ is large and positive. Thus, other definitions of the diagonal offset could be proposed. When $\alpha \rightarrow 0$, some lower eigenvalues are dropped out by taking the real part of the square root (see Supplementary Note 3); we show below that this improves the performance of the PRIS. We will specify which definition of $\Delta$ is used in our study when $\alpha \neq 0$. When choosing this definition of $\text{Sq}_\alpha (D)$ and operating the PRIS in the large noise limit, we can implement any general Ising model (\cref{ising}) on the PRIS (Equation (\ref{ham_large_noise})).

It has been noted that by setting $\text{Sq}_\alpha (D) = D $ (i.e. the linear photonic domain matrix amounts to the Ising coupling matrix $2J = K$), the free energy of the system equals the Ising free energy at any finite temperature (up to a factor of 2, thus exhibiting the same ground states) in the particular case of associative memory couplings \cite{Amit1985} with finite number of patterns and in the thermodynamic limit, thus drastically constraining the number of degrees of freedom on the couplings. This regime of operation is a direct modification of the Hopfield network, an energy-based model where the couplings between neurons is equal to the Ising coupling between spins. The essential difference between the PRIS in the configuration $\text{Sq}_\alpha (D) = D$ and a Hopfield network is that the former relies on synchronous spin updates (all spins are updated at every step, in this so-called Little network \cite{Little1974}) while the latter relies on sequential spin updates (a single randomly picked spin is updated at every step). The former is better suited for a photonic implementation with parallel photonic networks.

In this regime of operation, the PRIS can also benefit from computational speed-ups, if implemented on a conventional architecture, for instance if the coupling matrix is sparse. However, as has been pointed out in theory \cite{Amit1985} and by our simulations (see Supplementary Note 4, Figure S7), some additional considerations should be taken into account in order to eliminate non-ergodic behaviors in this system. As the regime of operation described by Equation (\ref{sq_cond}) is general to any coupling, we will use it in the following demonstrations.

\begin{figure*}
\includegraphics[scale=0.6]{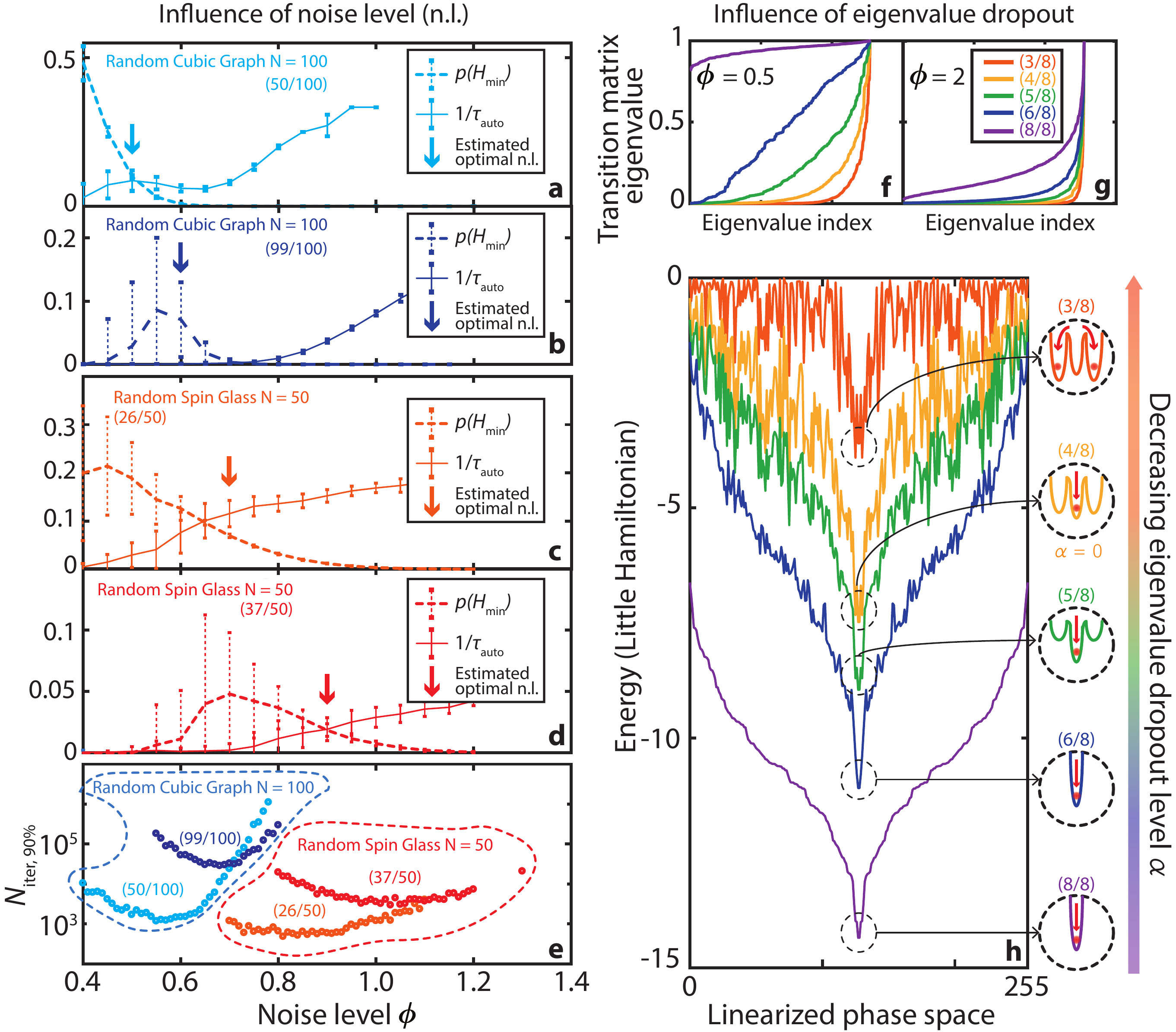}
\caption{\textbf{Influence of noise and eigenvalue dropout levels.} \textbf{(a)-(d):} Probability of finding the ground state, and the inverse of the autocorrelation time as a function of noise level $\phi$ for a sample Random Cubic Graph \cite{McMahon2016} ($N = 100$, (50/100) eigenvalues (\textbf{a}), (99/100) eigenvalues (\textbf{b}) and a sample spin glass ($N = 50$, (37/100) eigenvalues (\textbf{c}), (26/100) eigenvalues (\textbf{d})). The arrows indicate the estimated optimal noise level, from Equation (\ref{Niter}), taking $\tau^E_\text{eq}$ to be constant. For this study we averaged the results of 100 runs of the PRIS with random initial states with error bars representing $\pm \sigma$ from the mean over the 100 runs. We assumed $\Delta_{ii} = | \sum_j K_{ij} |$. \textbf{(e):} $N_{\text{iter}, 90\%}$ versus noise level $\phi$ for these same graphs and eigenvalue dropout levels. \textbf{(f)-(g):} Eigenvalues of the transition matrix of a sample spin glass ($N = 8$) at $\phi = 0.5$ (\textbf{(f)}) and $\phi = 2$ (\textbf{(g)}). \textbf{(h):} The corresponding energy is plotted for various eigenvalue dropout levels $\alpha$, corresponding to less than $N$ eigenvalues kept from the original matrix. The inset is a schematic of the relative position of the global minimum when $\alpha = 1$ (with (8/8) eigenvalues) with respect to nearby local minima when $\alpha < 1$. For this study we assumed $\Delta_{ii} = \sum_j |K_{ij}|$.}
\label{fig:3}
\end{figure*}

\subsection{Finding the ground state of Ising models with the PRIS}
We investigate the performance of the PRIS on finding the ground state of general Ising problems (\cref{ising}) with two types of Ising models: MAX-CUT graphs, which can be mapped to an instance of the unweighted MAX-CUT problem \cite{McMahon2016} and all-to-all spin glasses, whose connections are uniformly distributed in $[-1,1]$ (an example illustration of the latter is shown as an inset in Figure \ref{fig:2}(a)). Both families of models are computationally NP-hard problems \cite{Barahona1982}, thus their computational complexity grows exponentially with the graph order $N$.

The number of steps necessary to find the ground state with $99\%$ probability, $N_{\text{iter}, 99\%}$ is shown in Figure \ref{fig:2}(a-b) for these two types of graphs (see definition in Supplementary Note 4 and in the Methods section). As the PRIS can be implemented with high-speed parallel photonic networks, the on-chip real time of a unit step can be less than a nanosecond \cite{Shen2017a, Lipson2005} (and the initial setup time for a given Ising model is typically of the order of microseconds with thermal phase shifters \cite{Harris2014}). In such architectures, the PRIS would thus find ground states of arbitrary Ising problems with graph orders $N \sim 100$ within less than a millisecond. We also show that the PRIS can be used as a heuristic ground state search algorithm in regimes where exact solvers typically fail ($N \sim 1,000$) and benchmark its performance against MH and conventional metaheuristics (SA) (see Supplementary Note 6). Interestingly, both classical and quantum optical Ising machines have exhibited limitations in their performance related to the graph density \cite{McMahon2016, Hamerly2018}. We observe that the PRIS is roughly insensitive to the graph density, when optimizing the noise level $\phi$ (see Figure \ref{fig:2}(c), shaded green area). A more comprehensive comparison should take into account the static fabrication error in integrated photonic networks \cite{Shen2017a} (see also Supplementary Note 5), even though careful calibration of their control electronics can significantly reduce its impact on the computation \cite{Miller2015a, Burgwal2017}.

\subsection{Influence of the noise and eigenvalue dropout levels}

For a given Ising problem, there remain two degrees of freedom in the execution of the PRIS: the noise and eigenvalue dropout levels. The noise level $\phi$ determines the level of entropy in the Gibbs distribution probed by the PRIS $p(E) \propto \exp (-\beta (E - \phi S(E)))$, where $S(E)$ is the Boltzmann entropy associated with the energy level $E$. On the one hand, increasing $\phi$ will result in an exponential decay of the probability of finding the ground state $p(H_\text{min}, \phi)$. On the other hand, too small a noise level will not satisfy the large noise approximation $H_L \sim H_2$ and result in large autocorrelation times (as the spin state could get stuck in a local minimum of the Hamiltonian). Figure \ref{fig:3}(e) demonstrates the existence of an optimal noise level $\phi$, minimizing the number of iterations required to find the ground state of a given Ising problem, for various graph sizes, densities, and eigenvalue dropout levels. This optimal noise value can be approximated upon evaluation of the probability of finding the ground state $p(H_\text{min}, \phi)$ and the energy autocorrelation time $\tau^E_\text{auto}$, as the minimum of the following heuristic 
\begin{equation}
N_{\text{iter}, q} \sim \tau^E_\text{eq}(\phi) + \tau^E_\text{auto}(\phi) \frac{\log (1-q)}{\log (1-p(H_\text{min}, \phi))},
\label{Niter}
\end{equation}
which approximates the number of iterations required to find the ground state with probability $q$ (see Figure \ref{fig:3}(a-e)). In this expression, $\tau^E_\text{eq}(\phi)$ is the energy equilibrium (or burn-in) time. As can be seen in Figure \ref{fig:3}(e), decreasing $\alpha$ (and thus dropping more eigenvalues, with the lowest eigenvalues being dropped out first) will result in a smaller optimal noise level $\phi$. Comparing the energy landscape for various eigenvalue dropout levels (Figure \ref{fig:3}(h)) confirms this statement: as $\alpha$ is reduced, the energy landscape is perturbed. However, for the random spin glass studied in Figure \ref{fig:3}(f-g), the ground state remains the same down to $\alpha = 0$. This hints at a general observation: as  lower eigenvalues tend to increase the energy, the Ising ground state will in general be contained in the span of eigenvectors associated with higher eigenvalues (see discussion in the Supplementary Note 3). Nonetheless, the global picture is more complex, as the solution of this optimization problem should also enforce the constraint $\sigma \in \{ -1 , 1\}^N$. We observe in our simulations that $\alpha = 0$ yields a higher ground state probability and lower autocorrelation times than $\alpha > 0$ for all the Ising problems we used in our benchmark. In some sparse models, the optimal value can even be $\alpha < 0$ (see Figure S3 in the Supplementary Note 4). The eigenvalue dropout is thus a parameter that constrains the dimensionality of the ground state search.

The influence of eigenvalue dropout can also be seen from the perspective of the transition matrix. Figure \ref{fig:3}(f-g) shows the eigenvalue distribution of the transition matrix for various noise and eigenvalue dropout levels. As the PRIS matrix eigenvalues are dropped out, the transition matrix eigenvalues become more nonuniform, as in the case of large noise (Figure \ref{fig:3}(g)). Overall, the eigenvalue dropout can be understood as a means of pushing the PRIS to operate in the large noise approximation, without perturbing the Hamiltonian in such a way that would prevent it from finding the ground state. The improved performance of the PRIS with $\alpha \sim 0$ hints at the following interpretation: the perturbation of the energy landscape (which affects $p(H_\text{min})$) is counterbalanced by the reduction of the energy autocorrelation time induced by the eigenvalue dropout. The existence of these two degrees of freedom suggests a realm of algorithmic techniques to optimize the PRIS operation. One could suggest, for instance, setting $\alpha \approx 0$, and then performing an inverse simulated annealing of the eigenvalue dropout level to increase the dimensionality of the ground state search. This class of algorithms could rely on the development of high-speed, low-loss integrated modulators \cite{Almeida2004, Lipson2005, Phare2015, Haffner2018}.

\section{Discussion}
\subsection{Detecting and characterizing critical behaviors with the PRIS}

The existence of an effective Hamiltonian describing the PRIS dynamics (Equation (\ref{ham_large_noise})) further suggests the ability to generate samples of the associated Gibbs distribution at \textit{any finite temperature}. This is particularly interesting considering the various ways in which noise can be added in integrated photonic circuits by tuning the operating temperature, laser power, photodiode regimes of operation, etc. \cite{Horowitz1990, Hamerly2018b}. This alludes to the possibility of detecting phase transitions and characterizing critical exponents of universality classes, leveraging the high speed at which photonic systems can generate uncorrelated heuristic samples of the Gibbs distribution associated with Equations (\ref{ham_little},\ref{ham_large_noise}). In this part, we operate the PRIS in the regime where the linear photonic matrix is equal to the Ising coupling matrix ($\text{Sq}_\alpha (D) = D$) \cite{Amit1985}. This allows us to speedup the computation on a CPU by leveraging symmetry and sparsity of the coupling matrix $K$. We show that the regime of operation described by Equation (\ref{sq_cond}) also probes the expected phase transition (see Supplementary Note 4).

A standard way of locating the critical temperature of a system is through the use of the Binder cumulant \cite{Landau2009} $U_4(L) = 1 - \langle m^4 \rangle / (3 \langle m^2 \rangle^2)$, where $m = \sum_{i=1}^N \sigma_i / N$ is the magnetization and $\langle . \rangle$ denotes the ensemble average. As shown in Figure \ref{fig:4}(a), the Binder cumulants intersect for various graph sizes $L^2 = N$ at the critical temperature of $ T_C = 2.241$ (compared to the theoretical value of 2.269 for the two-dimensional Ferromagnetic Ising model, i.e. within 1.3\%). The heuristic samples generated by the PRIS can be used to compute physical observables of the modeled system, which exhibit the emblematic order-disorder phase transition of the two-dimensional Ising model \cite{Onsager1944, Landau2009} (Figure \ref{fig:4}(b)). In addition, critical parameters describing the scaling of the magnetization and susceptibility at the critical temperature can be extracted from the PRIS to within $10\%$ of the theoretical value (see Supplementary Note 4). 

\begin{figure}
\includegraphics[scale=0.38]{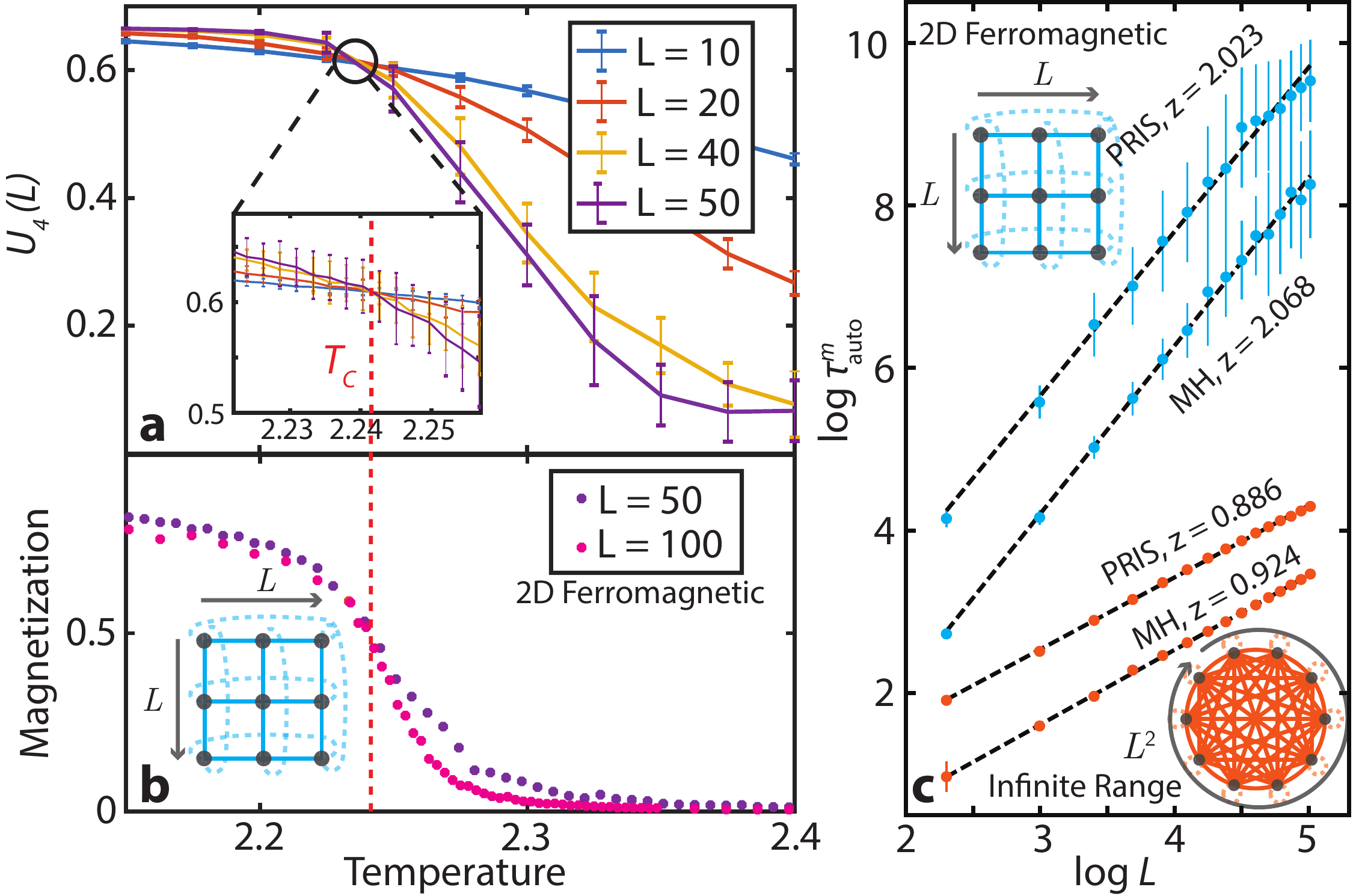}
\caption{\textbf{Detecting and characterizing critical behaviors. a:} Binder cumulants $U_4 (L)$ for various graph sizes L on the 2D Ferromagnetic Ising model. Their intersection determines the critical temperature of the model $T_C$ (denoted by a dotted line). \textbf{b:} Magnetization estimated from the PRIS for various $L$. \textbf{c:} Scaling of the PRIS magnetization autocorrelation time for various Ising models, benchmarked versus the Metropolis-Hastings algorithm (MH). The complexity of a single time step scales like $N^2 = L^4$ for MH on a CPU and like $N = L^2$ for the PRIS on a photonic platform. For readability, error bars in (b) are not shown (see Supplementary Note 4).}
\label{fig:4}
\end{figure}

In Figure \ref{fig:4}(c), we benchmark the performance of the PRIS against the well-known Metropolis-Hastings (MH) algorithm \cite{Landau2009, Metropolis1953, Hastings1970}. In the context of heuristic methods, one should compare the autocorrelation time of a given observable. The scaling of the magnetization autocorrelation time $\tau^m_\text{auto} = \mathcal{O}(L^z) = \mathcal{O}(N^{z/2})$ at the critical temperature is shown in Figure \ref{fig:4}(c) for two analytically-solvable models: the two-dimensional ferromagnetic and the infinite-range Ising models. Both algorithms yield autocorrelation time critical exponents close to the theoretical value ($z \sim 2.1$) \cite{Landau2009} for the two-dimensional Ising model. However, the PRIS seems to perform better on denser models such as the infinite-range Ising model, where it yields a smaller autocorrelation time critical exponent. More significantly, the advantage of the PRIS resides in its possible implementation with any matrix-to-vector accelerator, such as parallel photonic networks, so that the computational (time) complexity of a single step is $\mathcal{O}(N)$ \cite{Shen2017a, Clements2016, Reck1994}. Thus, the computational complexity of generating an uncorrelated sample scales like $\mathcal{O}(N^{1+z_\text{PRIS}/2})$ for the PRIS on a parallel architecture, while it scales like $\mathcal{O}(N^{2+z_\text{MH}/2})$ for a sequential implementation of MH, on a CPU for instance. Implementing the PRIS on a photonic parallel architecture also ensures that the prefactor in this order of magnitude estimate is small (and only limited by the clock rate of a single recurrent step of this high-speed network). Thus, as long as $z_\text{PRIS} < z_\text{MH} + 2$, the PRIS exhibits a clear advantage over MH implemented on a sequential architecture.

\subsection{Conclusion}
To conclude, we presented in this Letter the PRIS, a photonic-based heuristic algorithm able to probe arbitrary Ising Gibbs distributions at various temperature levels. At low temperatures, the PRIS can find ground states of arbitrary Ising models with high probability. Our approach essentially relies on the use of matrix-to-vector product accelerators, such as photonic networks \cite{Shen2017a, Hamerly2018b}, free-space optical processors \cite{Farhat1985}, FPGAs \cite{Dean2018}, and ASICs \cite{Dou2005} (see comparison of time estimates in the Supplementary Note 5). We also perform a proof-of-concept experiment on a Xilinx Zynq UltraScale+ multiprocessor system-on-chip (MPSoC) ZCU104, an electronic board containing a parallel programmable logic unit (FPGA – Field Programmable Gate Arrays). We run the PRIS on large random spin glasses $N=100$ and achieve algorithm time steps of 63 ns. This brings us closer to photonic clocks $\lesssim 1$ ns, thus demonstrating that (1) the PRIS can leverage parallel architectures of various natures, electronics and photonics; (2) the potential of hybrid parallel opto-electronic implementations. Details of the FPGA implementation and numerical experiments are given in Supplementary Note 7.

Moreover, our system requires some amount of noise to perform better, which is an unusual behavior only observed in very few physical systems. For instance, neuroscientists have conjectured that this could be a feature of the brain and spiking neural networks \cite{Knill2004, Maass2014}. The PRIS also performs a static transformation (and the state evolves to find the ground state). This kind of computation can rely on a fundamental property of photonics --- passivity --- and thus reach even higher efficiencies. Non-volatile phase-change materials integrated in silicon photonic networks could be leveraged to implement the PRIS with minimal energy costs \cite{Wang2016}.

We also suggested a broader family of photonic metaheuristic algorithms which could achieve even better performance on larger graphs (see Supplementary Note 6). For instance, one could simulate annealing with photonics by reducing the system noise level (this could be achieved by leveraging quantum photodetection noise \cite{Hamerly2018b}, see discussion in Supplementary Notes 5 and 6). We believe that this class of algorithms that can be implemented on photonic networks is broader than the metaheuristics derived from MH, since one could also simulate annealing on the eigenvalue dropout level $\alpha$.

The ability of the PRIS to detect phase transitions and probe critical exponents is particularly promising for the study of universality classes, as numerical simulations suffer from critical slowing down: the autocorrelation time grows exponentially at the critical point, thus making most samples too correlated to yield accurate estimates of physical observables. Our study suggests that this fundamental issue could be bypassed with the PRIS, which can generate a very large number of samples per unit time -- only limited by the bandwidth of active silicon photonics components.

The experimental realization of the PRIS on a photonic platform would require additional work compared to the demonstration of deep learning with nanophotonic circuits \cite{Shen2017a}. The noise level can be dynamically induced by several well-known sources of noise in photonic and electronic systems \cite{Horowitz1990}. However, attaining a low enough noise due to heterogeneities in a static architecture, and characterizing the noise level are two experimental challenges. Moreover, the PRIS requires an additional homodyne detection unit, in order to detect both the amplitude and the phase of the output signal from the linear photonic domain. Nonetheless, these experimental challenges do not impact the promising scaling properties of the PRIS, since various photonic architectures have recently been proposed\cite{Saade2016, Shen2017a, Tait2017, Lin2018, Hamerly2018b}, giving a new momentum to photonic computing.

\section{Methods}

\subsection{Numerical simulations to evaluate performance on finding ground states of Ising models}

To evaluate the performance of the algorithm on several Ising problems, we simulate the execution of an ideal photonic system, performing computations without static error. The noise is artificially added after the matrix multiplication unit and follows a Gaussian distribution, as discussed above. This results in an algorithm similar to the one des¡cribed in the section II of this work.

In the main text, we present the scaling performance of the PRIS as a function of the graph order. For each graph order and density, we generate 10 random samples with these properties. We then optimize the noise level (minimizing $N_{\text{iter}, 99\%}$) on a random sample graph and generate a total of 10 samples for each pair of graph order/density. The optimal value of $\phi$ is shown in Figure S2 in Supplementary Note 4.

For each randomly generated graph, we first compute its ground state with the online platform BiqMac \cite{Rendl2010}. We then make 100 measurements of the number of steps required (with a random initial state) to get to this ground state. From these 1000 runs, we define the estimate of finding the ground state of the problem with $q$ percent probability $N_{\text{iter}, q\%}$ as the $q$-th quantile.

Also in the main text, we study the influence of eigenvalue dropout and of the noise level on the PRIS performance. We show that the optimal level of eigenvalue dropout is usually $\alpha <1$, and around $\alpha = 0$. In some cases, it can even be $\alpha < 0$ as we show in Figure S3 in Supplementary Note 4 where the optimal $(\alpha, \phi) = (-0.15, 0.55)$ for a random cubic graph with $N = 52$. In addition to Figure 3(f-h) from the main text showing the influence of eigenvalue dropout on a random spin glass, the influence of dropout on a random cubic graph is shown in Figure S4 in Supplementary Note 4. Similar observations can be made, but random cubic graphs, which show highly degenerated hamiltonian landscapes, are more robust to eigenvalue dropout. Even with $\alpha = -0.8$, in the case shown in Figure S4 in Supplementary Note 4 the ground state remains unaffected.

\subsection{Others}
Further details on generalization of the theory of the PRIS dynamics, construction of the weight matrix $J$, numerical simulations, scaling performance of the PRIS, and comparison of the PRIS to other (meta)heuristics algorithms can be found in the Supplementary Information.

\color{black}
\section{Acknowledgements} The authors would like to acknowledge Aram Harrow, Mehran Kardar, Ido Kaminer, Miriam Farber, Theodor Misiakiewicz, Manan Raval, Nicholas Rivera, Nicolas Romeo, Jamison Sloan, Can Knaut, Joe Steinmeyer, and Gim P. Hom for helpful discussions. The authors would also like to thank Angelika Wiegele (Alpen-Adria-Universität Klagenfurt) for providing solutions of the Ising models considered in this work with $N\geq 50$ (computed with BiqMac \cite{Rendl2010}). This work was supported in part by the Semiconductor Research Corporation (SRC) under SRC contract \#2016-EP-2693-B (Energy Efficient Computing with Chip-Based Photonics – MIT). This work was supported in part by the National Science Foundation (NSF) with NSF Award \#CCF-1640012 (E2DCA: Type I: Collaborative Research: Energy Efficient Computing with Chip-Based Photonics). This material is based upon work supported in part by the U. S. Army Research Laboratory and the U.S. Army Research Office through the Institute for Soldier Nanotechnologies, under contract number W911NF-18-2-0048. C. Z. was financially supported by the Whiteman Fellowship. M. P. was financially supported by NSF Graduate Research Fellowship grant number 1122374.

\section{Author contributions}
C. R.-.C., Y. S. and M.S. conceived the project. C. R.-C. and Y. S. developed the analytical models and numerical calculations, with contributions from C. Z., M. P., L. J. and T. D.; C. R.-C. and C. Z. performed the benchmarking of the PRIS on analytically solvable Ising models and large spin glasses. C. R.-C. and F.A. developed the analytics for various noise distributions. C. M., M. R. J., and C. R.-C. implemented the PRIS on FPGA. Y. S., J. D. J., D. E. and M.S. supervised the project. C. R.-C. wrote the manuscript with input from all authors.

\section{Additional information}
Correspondence and requests for materials should be addressed to C. R.-C. 

\section{Competing financial interests}
The authors declare the following patent application: U.S. Patent Application No.: 16/032,737.

\bibliographystyle{ieeetr}
\bibliography{ising, resetnumbers=true}

\onecolumngrid
\renewcommand*\rmdefault{ptm}
\setcounter{figure}{0}
\let\conjugatet\overline

\newpage 
\section{Heuristic Recurrent Algorithms for Photonic Ising Machines \\ SUPPLEMENTARY INFORMATION}

\newpage

\maketitle

\setcounter{tocdepth}{1}
\tableofcontents

\newpage 

In this work, we first provide a comprehensive theoretical analysis of the dynamics of the Photonic Recurrent Ising Sampler (PRIS), with an emphasis on how the weight matrix in the algorithm should be chosen. We then detail the parameters of our numerical simulations carried to estimate the performance of the PRIS on finding the ground state of arbitrary Ising problems and to sample critical behaviors.

\section{Supplementary Note 1: Theoretical preliminaries}

\subsection{Definition of an Ising problem}
The general definition of an Ising problem is the following: given a matrix $(K_{ij})_{(i,j) \in I^2}$, and a vector $\{ b_i \}_{i \in I}$ where $I$ is a finite set with cardinality $|I| = N$, we want to find the spin distribution $\{ \sigma_i\}_{i \in I} \in \{-1, 1\}^{N}$ minimizing the following Hamiltonian function:
\begin{eqnarray}
H_{K,b} (\{ \sigma_i\})  &=& -\sum_{1 \leq i<j \leq N}  K_{ij} \sigma_i \sigma_j - \sum_i b_i \sigma_i \\
&=& -\frac{1}{2} \sum_{1 \leq i,j \leq N}  K_{ij} \sigma_i \sigma_j - \sum_i b_i \sigma_i.
\label{ising_general}
\end{eqnarray}

In statistical mechanics, this model represents the energy of an interacting set of spins. The coupling term $K_{ij}$ represents the coupling between spins $i$ and $j$. In the general definition of the Ising model, the coupling matrix $K$ is assumed to be symmetric. We could also assume that it has a null diagonal, as it would only add a constant offset to the Hamiltonian. We are not making this extra assumption, as it will prove later to be useful. An external magnetic field $b_i$ can be applied, which breaks the symmetry of the problem. This general class of Ising problems is NP-hard \cite{Karp1972} and many subclasses of this problem exhibit a similar complexity. In this work, we will characterize our optical algorithms on two subclasses of problems:
\begin{enumerate}
\item[$\triangleright$] General antiferromagnets, for which the coupling $K$ can only take two discrete values $K_{ij} \in \{0, -1\}$, and $b_i = 0$ for all $i$. As every problem in this subclass is equivalent to an unweighted MAX-CUT problem, and that MAX-CUT is known to be NP-hard \cite{Karp1972}, this subclass is already NP-hard. 
\item[$\triangleright$] Spin glasses, for which the coupling $K$ can take on continuous values in the range $[-1,1]$, and $b_i = 0$. 
\end{enumerate}
However, our approach can be easily extended to any coupling matrix $K$. The possible extension to non-zero external magnetic fields will also be discussed. A complete study of the hardness of the various subclasses of Ising problems can be found in Ref.\cite{Barahona1982}.

\subsection{Mapping to the MAX-CUT problem}

The MAX-CUT problem of a weighted undirected graph can be phrased in terms of its adjacency matrix $A$:
\begin{equation}
\text{Find} \ \ \text{argmax}_\sigma \frac{1}{4} \sum_{ij} A_{ij} (1 - \sigma_i \sigma_j)  := \text{argmax}_\sigma C_A(\sigma),
\end{equation}
where $\sigma \in \{-1, 1\}^N$ is a spin vector. The value of this MAX-CUT problem is called $C_\text{max}$. More intuitively, this problem can be interpreted as finding a subset of vertices of the graph, such that the number of edges connecting this subset to its complementary is maximized. The vertices of this subset (resp. of its complementary set) will have spin up (resp. spin down).

We can map the general Ising problem (spin glass) (Equation (\ref{ising_general})) to any weighted MAX-CUT problem. This mapping is useful because the publicly-available solver we use to find the Ising ground state works in terms of the MAX-CUT problem \cite{Rendl2010}. The energy of the Ising ground state and the MAX-CUT are related as follows:
\begin{equation}
H_{\text{min}} = - \frac{1}{2} \sum_{ij} K_{ij} - 2C_{\text{max}},
\end{equation}
where $K = - A$. From this linear relation, we also deduce that the solution of the weighted MAX-CUT problem and of the general Ising model (spin glass) are the same:
\begin{equation}
\text{argmax}_\sigma C_A(\sigma) = \text{argmin}_\sigma H^{(K)}(\sigma).
\end{equation}

\newpage

\section{Supplementary Note 2: General theory of Photonic Recurrent Ising Sampler}

\subsection{Optical Parallel Recurrent MCMC algorithm}
The Little and Hopfield networks \cite{Little1974, Hopfield1982} are two early forms of recurrent neural networks, originally designed to understand and model associative memory processes. In their general acceptance, the Little network can be understood as a synchronous version of the Hopfield network. Using the latter to solve the NP-hard Ising problem was previously investigated \cite{Hopfield, Looi1992}. The behavior of both networks was later studied from a statistical mechanical perspective by P. Peretto \cite{Peretto1984} and Daniel J. Amit and colleagues \citep{Amit1985}. We here generalize their study to the case of noisy, synchronous Hopfield network, or noisy Little network \cite{Peretto1984} and investigate the possibility of sampling hard Ising problems with a parallel, recurrent optical system. In the following, we equivalently use the term "neuron" and "spin" in the framework of spin glass models of neural networks.

Our machine is applying the following algorithm:
\begin{enumerate}
\item[$\triangleright$] Initialize assembly of neurons $\{ \sigma_i \}$ with random values. The signal is coded into optical domain with the reduced spins $S_i = (\sigma_i+1)/2 \in \{0,1 \}$.
\item[$\triangleright$] At each step of the algorithm, each neuron is applied a potential $v_i$ (random variable) whose expected value is given by a matrix multiplication with $C = 2J$:
\begin{equation}
\bar{v}_i = \sum_j C_{ij} S_j = \sum_{j=1}^N J_{ij} \sigma_j + \sum_{j=1}^N  C_{ij}/2
\end{equation}
and whose probability density is given by the function $f_\phi$ with mean $\bar{v_i}$ and standard deviation $\phi$:
\begin{eqnarray}
\int f_\phi(x) \  x  \ dx &=& \bar{v}_i, \\
\int f_\phi(x)(x - \bar{v}_i)^2 \ dx &=& \phi^2.
\end{eqnarray}
\item[$\triangleright$] A non-linear threshold $\text{Th}_\theta$ is then applied to $\bar{v_i}$ to yield the neural state at the next time step:
\begin{equation}
  \text{Th}_{\theta_i} (S_i)= \begin{cases}
    0, & \text{if} \ \ v_i < \theta_i,\\
    1, & \text{otherwise}.
  \end{cases}
\end{equation}
\end{enumerate}
To summarize, the sequential transformation of the neuron state is:
\begin{eqnarray}
S^{(t+1)} &=& \text{Th}_\theta(X^{(t)}),\\
X^{(t)} & \sim & \mathcal{N}(CS^{t}|\phi),
\end{eqnarray}
where $\mathcal{N}(x|\phi)$ is a gaussian distribution with mean $x$ and standard deviation $\phi$. The algorithm is fully determined by the following set of transformation variables: $(C, f_\varphi, \theta)$.

\subsection{Determination of the Temperature factor}

In this section, we compute the stationary distribution of the spin variable $S^t$ for the general case with variables $(C, f_\varphi, \theta)$. We first determine the probability of a single spin-flip, knowing that the current spin state is $J$:
\begin{eqnarray}
W(\sigma_i (I) | J) &=& G(h_i (J) \sigma_i(I)), \\
h_i(J) & =& - \sum_{j=1}^N J_{ij}\sigma_j (J) - h_i^0, \\
h_i^0 &=& \sum_j C_{ij}/2 - \theta_i, \label{h0}
\end{eqnarray}
where we define $G$ as a rescaled version of the noise cumulative density function:
\begin{eqnarray}
F(x) &=& \int_{-\infty}^{x} f_\phi(u) \ du,\\
F_0(x) &=& F(x + \bar{v}_i), \\
G (x) & = & 1 - F_0(x).
\end{eqnarray}
In the following, we only assume that $f_\phi$ is symmetric and has a standard deviation $\phi$. The symmetry assumption is not constraining as most noise distributions found in nature have a symmetric (and, in many cases, gaussian) distribution \cite{Horowitz1990}. Our analysis can also be extended to noise distributions whose variance is infinite by parameterizing the distribution with another parameter, usually referred to as a "scaling" parameter (see Table \ref{temp_factor}). The case of $f_\phi$ being a gaussian distribution is discussed in Ref. \cite{Peretto1984}, where the function $G$ can be approximated by a sigmoid function when rescaling it by the adequate factor. In the general case, in order to minimize this approximation error, we choose the following rescaling factor
\begin{eqnarray*}
G_\gamma(x) & = & G(\gamma x),\\
1 - s(x) & = & \frac{1}{1 + e^x},\\
\gamma_\text{opt} &=& \text{argmin}_{\gamma'} \max_{X \in \mathbb{R}} | G_{\gamma'} (X) - (1-s(X)) | = \text{argmin}_{\gamma'} \max_{X \in \mathbb{R}} | \epsilon_{\gamma} (X)|,
\end{eqnarray*}
where $s(x) = 1 / (1 + e^{-x})$ is the sigmoid function.

Naturally, the optimal scaling factor will depend on the standard deviation $\phi$ of the original probability distribution $f$. We numerically observe that this dependence is linear
\begin{eqnarray}
\gamma_\text{opt} &=& \frac{k}{2} \ \phi.
\end{eqnarray}

Optimal scaling factors and corresponding errors are reported in Table \ref{temp_factor}. Identifying the transition probability to a sigmoid function is a necessary step to compute the effective Hamiltonian of the spin distribution \cite{Peretto1984}.

\begin{table}
\begin{center}
 \begin{tabular}{||c c c c||} 
 \hline
 Noise distribution & $G(x)$ & $\frac{k}{2}$ & $\epsilon_0 \equiv \max_{X \in \mathbb{R}} |\epsilon_{\gamma_\text{opt}}(X)|$ \\ [0.5ex] 
 \hline\hline
 Logistic & $\frac{1}{1+\exp\big(\frac{\pi x}{\sqrt{3} \phi}\big)}$ & $\frac{\sqrt{3}}{\pi}$ & 0 \\ 
 \hline
 Gaussian & $\frac{1}{2}(1-\text{erf}(\frac{x}{\sqrt{2}\phi}))$ & 0.5877 & 0.0095 \\
 \hline
 Cauchy$^{*}$ & $\frac{1}{\pi} \arctan (\frac{x}{\phi}) + \frac{1}{2}$ & 1.16 & 0.0495 \\
 \hline
 Laplace & $\left\{
\begin{array}{c l}	
     \frac{1}{2}\exp(-\frac{\sqrt{2}x}{\phi}) & x\leq 0\\
     1-\frac{1}{2}\exp(\frac{\sqrt{2}x}{\phi}) & x \geq 0
\end{array}\right.$ & 0.4735 & 0.0199 \\
 \hline
 Uniform & $\left\{
\begin{array}{c l}	
     0 & x\leq -\sqrt{3} \phi \\
     \frac{x+\sqrt{3}\phi}{2\sqrt{3}\phi} & x \in [-\sqrt{3}\phi, \sqrt{3}\phi[ \\
     1 & x \geq \sqrt{3}\phi
\end{array}\right.$ & 0.6136 & 0.0561 \\ [1ex] 
 \hline
\end{tabular}
\end{center}
\caption{\textbf{Summary of temperature factors for various noise distributions.} Each noise distribution is defined so that its standard deviation, when applicable, is equal to $\phi$. ($^*$the standard deviation of the Cauchy distribution is not defined, the linear dependence is measured as a function of the scaling parameter $\phi$ in this particular case.)}
 \label{temp_factor}
\end{table}

\subsection{Transition probability}
By choosing adequately the temperature factor $\frac{k}{2}$, we minimize the error when approximating the transition probability by a sigmoid function (in the following, $h_i = h_i (J)$ and $\sigma_i =\sigma_i (I)$, for readability):
\begin{eqnarray}
W(\sigma_i | J) &=& G_{\gamma_\text{opt}} \left( \frac{h_i \sigma_i}{\gamma_\text{opt}} \right) \\
 & = & s \left( \frac{h_i \sigma_i}{\gamma_\text{opt}} \right) + \epsilon_{\gamma_\text{opt}} \left( \frac{h_i \sigma_i}{\gamma_\text{opt}} \right) \\
 & \approx & s \left( \frac{h_i \sigma_i}{\gamma_\text{opt}} \right).
\end{eqnarray}
In the following, we will drop the subscript in $\gamma_\text{opt}$ and will assume that its value is known (it can be determined numerically, given the noise distribution). The error function $\epsilon$ can be used as an expansion parameter, in order to estimate the accuracy of the approximation performed when neglecting this term (as is done in Ref.\cite{Peretto1984}).

\subsubsection{Expansion of the transition probability in terms of $\epsilon$}
The transition probability can be derived by multiplying the probabilities of single spin-flips, these events being independent:
\begin{eqnarray}
W(I|J) &=& \prod_i W(\sigma_i |J) \\
 &=& \prod_i \left( s \left( \frac{h_i \sigma_i}{k\phi/2} \right) + \epsilon \left( \frac{h_i \sigma_i}{k\phi/2} \right) \right) \\
 &=& W^0 (I|J) + \sum_{k=1}^N W^k(I|J),
\end{eqnarray}
where $W^k(I|J)$ is the $k$-th order term in the expansion, assuming the error function $\varepsilon$ is small ($\varepsilon \ll s$). The zero-th order term of this expansion gives us back the result from \cite{Peretto1984}:
\begin{eqnarray}
W^0 (I|J) &=& \frac{e^{-\beta H^0 (I|J)}}{\sum_K e^{-\beta H^0 (K|J)}},\\
H^0 (I|J) & =& -\sum_{ij} J_{ij} \sigma_i(I) \sigma_j(J) -\sum_i h_i^0 \sigma_i (I),\\
\beta &=& \frac{1}{k\phi}.
\end{eqnarray}
Another way to write this zero-th order term \cite{Peretto1984} will prove to be useful later in our derivations:
\begin{equation}
W^0 (I|J) = \frac{e^{-\beta H^0 (I|J)}}{\prod_i 2 \cosh ( \beta h_i (J) )}.
\label{transprob2}
\end{equation}

The general expression for the $k-$th order term can be derived:
\begin{equation}
W^k (I|J) = W^0 (I|J) \sum_{j_1} \  ... \sum_{j_k \not\in \{ j_1, \ ..., \ j_{k-1} \} } \prod_l \epsilon \left( 2\beta h_{j_l} \sigma_{j_l} \right) \left( 1 + \exp \left( 2\beta h_{j_l} \sigma_{j_l} \right) \right).
\end{equation}
The $N-$th order term scales, in the worst case scenario, like $\epsilon_0^N$:
\begin{equation}
|W^N (I|J)| = \left|\prod_k \epsilon\left(2\beta h_{j_l} \sigma_{j_l}\right)\right| \leq \epsilon_0^N.
\end{equation}
For $k<N$, analyzing the scaling of the $k-$th order term requires more assumptions. Let's look at the case $k=1$, to verify that we can safely neglect higher-order terms in this expansion:
\begin{equation}
\left|\frac{W^1 (I|J)}{W^0 (I|J)} \right| \leq \sum_j \left| \epsilon\left(2\beta h_{j} \sigma_{j}\right) \left(1 + \exp\left({2\beta h_{j} \sigma_{j}}\right)\right) \right|.
\end{equation}
This ratio scales exponentially with $\beta h_{j} \sigma_{j}/2$. However, in this case, the error function $\epsilon$ also goes to zero. In addition, increasing the value of $\beta h_{j} \sigma_{j}/2$ also increases the value of the Hamiltonian $H^0 (I|J)$, and thus reduces the likelihood of the transition $W(I|J)$. \textbf{Thus, the larger the ratio $|\frac{W^1 (I|J)}{W^0 (I|J)}|$, the smaller the transition probability.} In the following, we will neglect all high order terms $k \geq 1$.

We do not extend our derivations to a general non-symmetric noise distribution in this work. We still suggest the following ideas to treat this extension:
\begin{enumerate}
\item[$\triangleright$] We could first treat the non-symmetric case as a perturbation of the symmetric case and thus derive the error on the effective Hamiltonian of the spin distribution.
\item[$\triangleright$] In this view, we suggest the parametrization of the skewness using the skew normal distribution \cite{Azzalini1996}, which is a natural extension of the toy-model symmetric noise distribution analyzed in Ref. \cite{Peretto1984}.
\end{enumerate}

\subsection{Detailed balance condition}
To determine the stationary distribution of the spin state, we first need to verify that the transition probability satisfies the triangular equality \cite{Peretto1984}, equivalent to the detailed balance condition in the context of Markov Chains \cite{Landau2009}:
\begin{equation}
W(I|K) W(K|J) W(J|I) = W(J|K) W(K|I) W(I|J) .
\end{equation}

This is equivalent to:
\begin{equation}
H(I|K) +  H(K|J) + H(J|I) = H(J|K) + H(K|I) + H(I|J),
\end{equation}
which is satisfied if $J$ is symmetric (there is no condition on the linear term $h^{0}_i$ in the Hamiltonian). We can then deduce the effective Hamiltonian by decoupling the ratio of transition probabilities into the ratio of two terms depending on distributions $I$ and $J$ (by using Eq. (\ref{transprob2})) \cite{Peretto1984}
\begin{eqnarray}
\frac{W^0 (I |J)}{W^0 (J | I)} & = & \frac{\exp (\beta \sum_i h^0_i \sigma_i(I) ) }{\exp (\beta \sum_i h^0_i \sigma_i(J) )} \frac{\prod_i \cosh (\beta (\sum_j J_{ij} \sigma_j(I) + h^0_i) }{\prod_i \cosh (\beta (\sum_j J_{ij} \sigma_j(J) + h^0_i) }\\
& = & \frac{F(I)}{F(J)},
\end{eqnarray}
and $H_L (I) = -\frac{1}{\beta} \ln F(I)$ gives the effective Hamiltonian describing the dynamics of the spin distribution:
\begin{equation}
H_L(I) = -\frac{1}{\beta} \sum_i \log \cosh (\beta (\sum_j J_{ij} \sigma_j (I) + h_i^0)) - \sum_i h_i^0 \sigma_i(I).
\end{equation}
\subsection{Effective Gibbs distribution}
Assuming $J$ is symmetric, the stationary distribution of the spin state is a Gibbs distribution given by the effective Hamiltonian $H_L (I)$\cite{Peretto1984, Amit1985}:
\begin{equation}
P (\{ \sigma_i \}) \propto \exp{ \left( -\beta H_L (\{ \sigma_i \}) \right)} .
\end{equation} 
In the following, we define $||.||_k$ as the usual $k$-norm (where $k$ is an integer). 
\subsubsection{Small noise approximation}
In the small noise approximation ($\beta \gg 1$), the effective Hamiltonian can be simplified using the following Taylor expansion : $\log \cosh x \approx |x| - \log2$:
\begin{eqnarray}
H_L (\{ \sigma_i \}, \beta ) & \approx & - \sum_{i} \big| \sum_j J_{ij} \sigma_j + h_i^0 |+\frac{N}{\beta} \log 2 - \sum_i h_i^0 \sigma_i\\
& = & - ||J \vec{\sigma} + \vec{h}^0||_1 + \frac{N}{\beta} \log 2  - \vec{h}^0 \cdot \vec{\sigma} := H_1 (\{ \sigma_i \}, \beta ).
\end{eqnarray}
\subsubsection{Large noise approximation}
In the large noise approximation ($\beta \ll 1$), the effective Hamiltonian can be simplified using the following Taylor expansion : $\log \cosh x \approx \log ( 1 + x^2 /2 ) \approx x^2 /2$,
\begin{eqnarray}
H_L(\{ \sigma_i \}, \beta ) &\approx & \underbrace{-\frac{\beta}{2} \sum_{ij} \tilde{K}_{ij} \sigma_i \sigma_j}_\text{Quadratic term} - \underbrace{\sum_i \sigma_i \left( \beta \sum_{k} J_{ki} h^0_k  + h_i^0 \right)}_\text{Linear term}  \\ 
&-& \underbrace{\frac{\beta}{2} \sum \left( h^0_j\right)^2}_\text{Constant (independent of $\sigma$)}  \\
&=& - \frac{\beta}{2}||J \vec{\sigma} + \vec{h}^0 ||_2^2 - \vec{h}^0 \cdot \vec{\sigma}  := H_{2} (\{ \sigma_i \}, \beta), \label{largenoise_exp_h}
\end{eqnarray}
with $\tilde{K} = J^2$.

\subsubsection{Discussion on $h^0_i$}

Identifying the Hamiltonian in Eq. (\ref{largenoise_exp_h}) to the general Ising Hamiltonian in Eq. (\ref{ising_general}) requires solving the following system of equations:
\begin{empheq}[left=\empheqlbrace]{align}
K_{ij} + \delta_{ij} \Delta_{i} &= \beta \sum_k J_{ik} J_{kj}  \label{system1}\\
b_i &= \beta \sum_j J_{ki} h^0_k + h^0_i \label{system2}\\
J_{ij} &= J_{ji},
\end{empheq}
with unknown variables $(h^0_i, J)$. $\delta_{ij}$ is the Kronecker symbol and $\Delta_{i}$ represents a degree of freedom on the diagonal of the Hamiltonian (which results in an additional constant term). Studying the set of general Ising Hamiltonians $(K,b)$ (Eq. (\ref{ising_general})) which can be mapped to an Ising network $(J,h^0)$ in the large noise approximation (Eq. (\ref{largenoise_exp_h})) would require an extensive study that we do not carry in this work. However, one can notice that:
\begin{enumerate}
\item[$\triangleright$] In the case where the external field satisfies $J \cdot h^0 = 0$, the system has a trivial solution $h^0_i = b_i$ and $J = \sqrt{K}$ (the conditions for which a square root of $K$ can be found are discussed in the next section).
\item[$\triangleright$] In the more general case, one is given $N$ "free" degrees of freedoms $\Delta_i$ in finding solutions to Eq. (\ref{system1}), while there are $N$ constraints to solve in Eq. (\ref{system2}). Thus, it seems likely that, in non-degenerate cases, this system will have a non-trivial solution. However, one should make sure that while tuning the degrees of freedom $\Delta_i$, the matrix $J$ remains a real square root of the matrix $K$ (see discussion in the next section on finding a square root). If not, this algorithm should be generalized to using complex-valued matrices $J$. We do not discuss this generalization of the algorithm in this work. However, considering that arbitrary (complex) unitary transformations can be implemented with our photonic architecture \cite{Shen2017a}, this would be an interesting extension to this work.
\end{enumerate}

\textbf{In the following, for the sake of simplicity, we will assume $h^0_i \equiv 0$.} In this case, the large noise approximation Hamiltonian gives: 
\begin{equation}
H_L(\{ \sigma_i \}, \beta) \approx -\frac{\beta}{2} \sum_{ij} \tilde{K}_{ij} \sigma_i \sigma_j ,
\end{equation}
with $\tilde{K} = J^2$. Thus, the implementation of the Little network to model a general Ising Hamiltonian (Eq. (\ref{ising_general})) strongly relies \textbf{on the possibility of finding a symmetric, real square root to the matrix $K$.}

Let us remind the reader of the assumptions we have made so far: 
\begin{enumerate}
\item[$\triangleright$] The model is \textit{synchronous}, or as stated in Little's original paper \cite{Little1974} "we shall suppose that the neurons are not permitted to fire at any random time but rather that they are synchronized such that they can only fire at some integral multiple of a period $\tau$". 
\item[$\triangleright$] The neurons have no memory of states older than their previous state (Markov process).
\item[$\triangleright$] The connections do not evolve in time (there is no learning involved).
\item[$\triangleright$] $J$ is symmetric and $h^0_i = 0$, which results in
\begin{equation}
\theta_i = \sum_j C_{ij}/2,
\end{equation}
from Eq. (\ref{h0}).
\end{enumerate}

\subsubsection{Inequalities between Hamiltonians}
Using basic algebra and functional analysis, we can show that $\log \cosh x \leq x^2/2$ and that $\log \cosh x \geq |x| - \log 2$, which results in the following inequality on the Hamiltonians
\begin{equation}
H_2 (\{ \sigma_i \}, \beta ) \leq H_L (\{ \sigma_i \}, \beta) \leq H_1 (\{ \sigma_i \}, \beta).
\end{equation}
By summing over spin configurations, we get the reversed inequality for the partition functions
\begin{equation}
Z_2 (\beta ) \geq Z_L(\beta) \geq Z_1(\beta).
\end{equation}

Another inequality can be derived by using the equivalence of norms $||.||_1$ and $||.||_2$ in finite dimensions ($ ||.||_2 \leq ||.||_1 \leq \sqrt{N} ||.||_2$):
\begin{equation}
0 \leq \sqrt{ - \frac{2}{\beta} H_2 (\{ \sigma_i \}, \beta )} \leq - H_1 (\{ \sigma_i \}, \beta)  + \frac{N}{\beta} \log 2 \leq \sqrt{\frac{-2N}{\beta}  H_2 (\{ \sigma_i \}) } .
\end{equation}

\begin{figure}
\begin{center}
\includegraphics[scale=0.55]{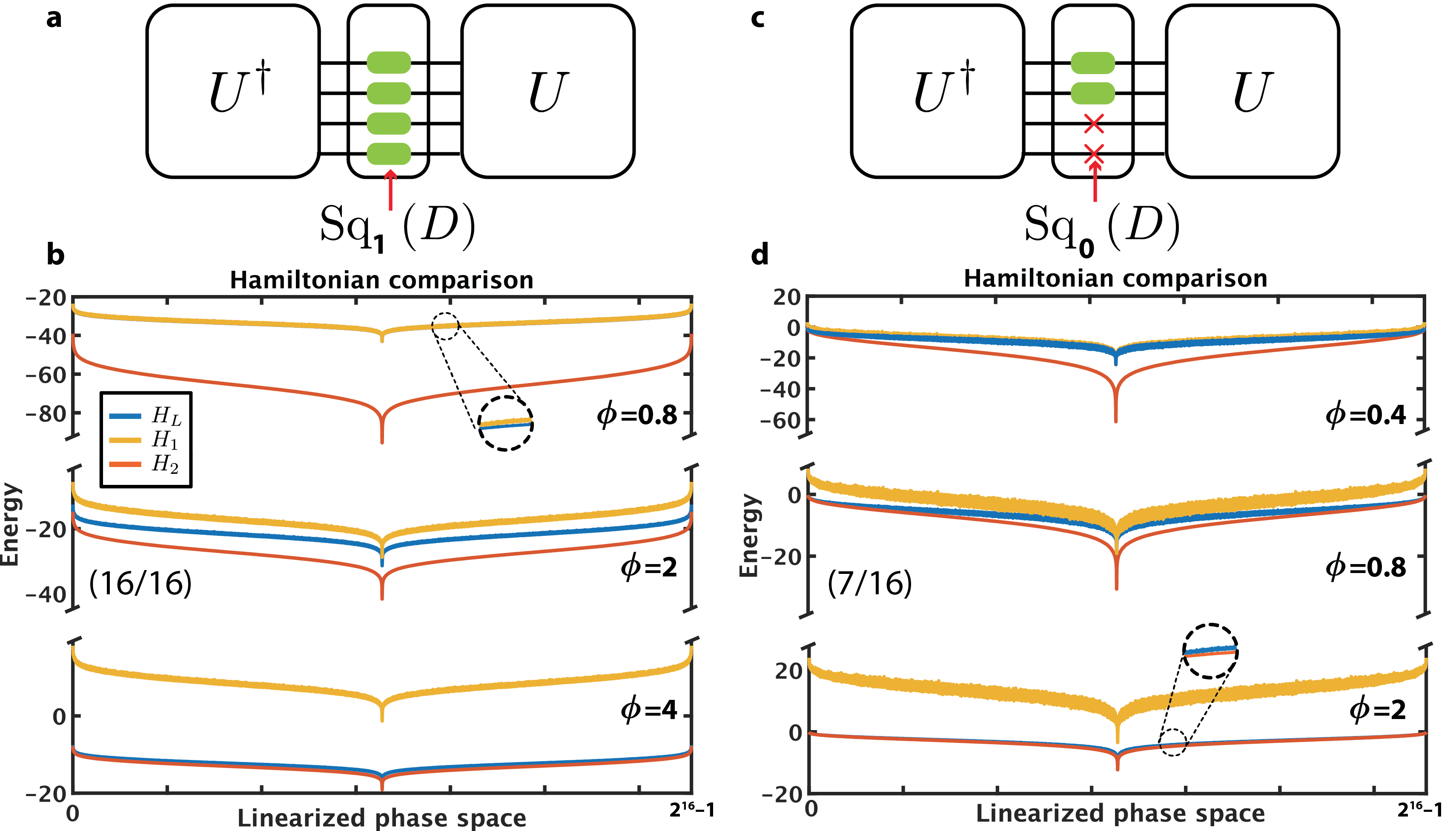}
\caption{\textbf{Transitioning between Hamiltonians. a:} To approximate a given Ising model in the large noise limit, one should take into account all eigenvalues. \textbf{b:} Effective Hamiltonian and its large/low-noise limit approximation when $\alpha = 1$ (with (16/16) eigenvalues). \textbf{c:} However, to reduce the noise threshold of the large noise expansion, one should drop out some eigenvalues. \textbf{d:} Effective Hamiltonian and its large/low-noise limit approximation when $\alpha = 0$ (with (7/16) eigenvalues, all negative eigenvalues being dropped out). In this Figure, we study a random spin glass with $N=16$ and choose $\Delta_{ii} = |\sum_{j} K_{ij}|$.}
\label{hamiltonians}
\end{center}
\end{figure}

These Hamiltonians are plotted for a sample random Ising problem in Figure \ref{hamiltonians}.

\subsubsection{Suggested algorithms}
As suggested above, from the large noise expansion of the effective Hamiltonian, a natural way to probe the Gibbs distribution of some Ising model defined by coupling matrix $K$ (Eq. (\ref{ising_general}) is to set the matrix of the recurrent loop $C$ to be a modified square root of $K$:
\begin{equation}
\text{Sq}_\alpha (J) = \text{Re} ( \sqrt{K + \alpha \Delta} ),
\label{sq_cond}
\end{equation}
where $\alpha \in [-1,1]$ is the eigenvalue dropout level, a parameter we will discuss below, and $\Delta$ is a diagonal offset matrix defined as the sum of the off-diagonal term of the Ising coupling matrix $\Delta_{ii} = \sum_{j \neq i} |K_{ij}|$. An alternative solution would be to set $C = K$. This solution has been studied in the particular case of associative memory couplings \cite{Amit1985}, where it was shown that the PRIS in this configuration would be described by a free energy function equal to the Ising free energy (up to a factor of 2). We will discuss the pros and cons of both algorithms in section IV.

\newpage

\section{Supplementary Note 3: Construction of the weight matrix}
We here propose a technique to build a square root of $K$ in order to find an approximate solution to \textit{any} Ising problem. We start with the general $K$ Ising weight matrix $K$ from Eq. (\ref{ising_general}). We notice that as $K$ is symmetric and $K_{ii} = 0$, $K$ will never obey the condition of Lemma 1. 

\begin{theorem}{(Diagonal dominance.)}
If a $A \in S_N (\mathbb{R})$ is a real symmetric matrix such that for all $i$, $|A_{ii}| \geq \sum_{j \neq i} |A_{ij}|$ and $A_{ii} > 0$, there exists $B \in S_{N} (\mathbb{R})$ such that $B^2 = A$.
\label{Lemma1}
\end{theorem}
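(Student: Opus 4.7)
The plan is to show that the hypotheses force $A$ to be positive semi-definite, at which point the existence of a symmetric square root is a standard fact from the spectral theorem. The reduction has two clean steps: first establish non-negativity of the spectrum from diagonal dominance, then build $B$ explicitly by taking square roots eigenvalue-wise.

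First I would invoke the Gershgorin circle theorem: every eigenvalue $\lambda$ of $A$ lies in some disc $\{ z \in \mathbb{C} \ : \ |z - A_{ii}| \leq \sum_{j \neq i} |A_{ij}| \}$. Since the hypothesis gives $A_{ii} > 0$ and $A_{ii} \geq \sum_{j \neq i} |A_{ij}|$, each such disc is contained in the closed right half-plane $\{ z \ : \ \operatorname{Re}(z) \geq 0 \}$. Because $A$ is real symmetric its eigenvalues are real, and so every eigenvalue satisfies $\lambda \geq 0$. Thus $A$ is symmetric positive semi-definite. (An alternative, which I would mention only if needed, is to quote the classical fact that a symmetric, weakly diagonally dominant matrix with non-negative diagonal is positive semi-definite — this is really just Gershgorin in disguise.)

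The second step is the spectral construction. Since $A \in S_N(\mathbb{R})$, the spectral theorem gives an orthogonal matrix $Q \in O_N(\mathbb{R})$ and a diagonal matrix $D = \operatorname{diag}(\lambda_1, \ldots, \lambda_N)$ with $\lambda_i \geq 0$ such that $A = Q D Q^T$. Define
\begin{equation}
B \ = \ Q \sqrt{D} \, Q^T, \qquad \sqrt{D} \ = \ \operatorname{diag}(\sqrt{\lambda_1}, \ldots, \sqrt{\lambda_N}).
\end{equation}
Then $B^T = Q \sqrt{D}^T Q^T = B$ since $\sqrt{D}$ is diagonal, so $B \in S_N(\mathbb{R})$; and $B^2 = Q \sqrt{D} Q^T Q \sqrt{D} Q^T = Q D Q^T = A$ by orthogonality of $Q$. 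This produces the required $B$ and proves the lemma.

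There is no real obstacle: the only subtlety is the Gershgorin step, and even there the sign condition $A_{ii} > 0$ is precisely what ensures the eigenvalue discs do not cross into the negative real axis (the weaker hypothesis $|A_{ii}| \geq \sum_{j \neq i} |A_{ij}|$ alone would allow $A_{ii} < 0$ and hence negative eigenvalues, in which case a real square root could fail to exist). It may also be worth remarking that the constructed $B$ is itself positive semi-definite, i.e.\ it is the unique positive semi-definite square root of $A$; this is useful later when the PRIS matrix $J$ in Eq.~(\ref{sq_cond}) is interpreted as $\tfrac12 \operatorname{Re}(\sqrt{K + \alpha \Delta})$, since for $\alpha$ large enough the hypothesis of the lemma is satisfied by $K + \alpha \Delta$ and the real part is redundant.
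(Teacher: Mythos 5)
Your proposal is correct and follows essentially the same route as the paper: the paper's contradiction argument (picking the coordinate of maximum absolute value of a putative negative-eigenvalue eigenvector) is precisely the standard proof of the Gershgorin disc bound that you invoke by name, and the spectral construction $B = Q\sqrt{D}Q^T$ is identical. The only difference is that you cite Gershgorin as a black box where the paper inlines its proof.
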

\begin{proof}
Let $X$ be a vector of size $N$ and norm $1$ such that $AX = \lambda X$ with $\lambda < 0$. We assume $i_0$ is the coordinate of $X$ with maximum absolute value (and $x_{i_0} > 0$): $x_{i_0} = ||x||_{\infty}$. We have $\sum_{j} A_{i_0 j} x_{j} = \lambda x_{i_0}$. We thus get :
\begin{eqnarray*}
||x||_{\infty} |A_{i_0 i_0}| &<& |A_{i_0 i_0} - \lambda | ||x||_{\infty}\\
&=& | \sum_{j\neq i_0} A_{i_0 j} x_j |\\
& \leq & \sum_{j\neq i_0} |A_{i_0 j}| ||x||_{\infty}.
\end{eqnarray*}

By simplifying this inequality, we get $|A_{i_0 i_0}| < \sum_{j\neq i_0} |A_{i_0 j}|$ which contradicts our assumption. Thus, for all $X$, $X^T AX \geq 0$, which means that $A$ is positive semidefinite. 

We can thus write $A=UDU^T$ where $U$ is unitary and $D = \text{Diag}(\lambda_1,\cdots,\lambda_N)$ with $\lambda_i \geq 0$. The result is given by $B = U \sqrt{D} U^T \in S_N (\mathbb{R})$ with $\sqrt{D} = \text{Diag}(\sqrt{\lambda_1},\cdots,\sqrt{\lambda_N})$ 
\end{proof}

To be able to apply Lemma \ref{Lemma1}, we need to add diagonal terms to the Ising matrix $K$ from Eq. \ref{ising_general} by defining $\tilde{K} = K + \Delta$ where $\Delta$ is a diagonal matrix such that $\tilde{K}$ verifies the assumptions of theorem 1. From there, we can define $J = \sqrt{\tilde{K}}$. In the large noise approximation, we thus get the following Hamiltonian describing the PRIS: 
\begin{eqnarray*}
H_L(\{ \sigma_i \}, \beta) \approx -\frac{\beta}{2} \sum_{ij} \tilde{K}_{ij} \sigma_i \sigma_j &=& -\beta \sum_{1 \leq i < j \leq N} \tilde{K}_{ij} \sigma_i \sigma_j - \frac{\beta}{2} \sum_{1 \leq i \leq N} \tilde{K}_{ii} \sigma_i^2 \\
&=& -\beta \sum_{1 \leq i < j \leq N} K \sigma_i \sigma_j - \frac{\beta}{2} \sum_{1 \leq i \leq N} \Delta_{ii}.
\end{eqnarray*}
 Since the $\Delta$ matrix is fixed, the second term in this equation is a constant (independent of the spin distribution, given an Ising problem to minimize). The Gibbs distribution is thus $P(\{\sigma_i\}) \propto \exp{ \left( - \beta H_L(\{ \sigma_i \}) \right)} \approx C_{te} \exp{ \left( - \beta H_{K,0} \right)}$ where $H_{K,0}$ is the Ising Hamiltonian defined in Eq. (\ref{ising_general}).
 
\subsection{Discussion on the diagonal offset} The diagonal offset that is added to the original Ising matrix can be considered as a supplementary degree of freedom. To verify Lemma 1, we can choose $\Delta_{ii} = \sum_{j \neq i} | K_{ij}|$, to make sure $\tilde{K}$ is positive definite and has a real symmetric square root $J^2 = \tilde{K}$.

However, there is no reciprocal to Lemma 1. We here show that even some partial or weak reciprocals are usually wrong:
\begin{enumerate}
\item[$\triangleright$] The direct reciprocal of Lemma 1 is generally wrong. We consider the following counter-example:
\[ A =  \left( \begin{array}{cc}
1 & -2  \\
-2 & 8 \end{array} \right) \].
A is symmetric, has positive diagonal elements but is not diagonally dominant (the inequality does not hold for only one diagonal element). However, $A$ is symmetric positive definite: its determinant is 4 and its trace 9, so the sum and product of eigenvalues is positive, which means both are positive.
\item[$\triangleright$] Even a weak reciprocal of Lemma 1 is wrong:
\begin{center}
If $A \in S_N (\mathbb{R})$, $A$ has positive diagonal elements, such that for all $i$, $|A_{ii}| < \sum_{j \neq i} |A_{ij}|$, then $A$ has at least one negative eigenvalue.
\end{center}
For example, one can consider :
\[ A =  \left( \begin{array}{ccc}
1 & 1 & 1  \\
1 & 1 & 1 \\
1 & 1 & 1  \end{array} \right) \].
$A$ is symmetric, has positive diagonal elements, verifies: $\forall i$, $|A_{ii}| < \sum_{j \neq i} |A_{ij}|$, but still, its eigenvalues are 0, 0, and 3.
\item[$\triangleright$] One (very) weak reciprocal of Lemma \ref{Lemma1} is the following Lemma \ref{Lemma2}. It means that any Ising matrix (that has a zero diagonal) will never have a real symmetric square root, unless we add an offset to its diagonal. 
\begin{theorem}
If $A \in S_N (\mathbb{R})$ (real symmetric matrix) such that for all $i$, $A_{ii} = 0$, then $A$ has at least one negative (and one positive) eigenvalue.
\label{Lemma2}
\end{theorem}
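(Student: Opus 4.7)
The plan is to exploit the fact that for a real symmetric matrix, the trace equals the sum of eigenvalues, and the hypothesis $A_{ii} = 0$ forces the trace to vanish.

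First I would observe that $\mathrm{tr}(A) = \sum_i A_{ii} = 0$. Since $A$ is real symmetric, the spectral theorem guarantees that $A$ is diagonalizable over $\mathbb{R}$ with real eigenvalues $\lambda_1, \ldots, \lambda_N$, and the trace identity gives $\sum_i \lambda_i = 0$. (Implicitly I would rule out the trivial case $A = 0$, in which the conclusion is vacuous; the intended reading of the lemma is that $A$ is a nonzero Ising coupling matrix, so at least one eigenvalue is nonzero.)

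Next I would argue by contradiction for each sign. Suppose all eigenvalues of $A$ satisfy $\lambda_i \geq 0$, with at least one strictly positive (which must hold since $A \neq 0$). Then $\sum_i \lambda_i > 0$, contradicting $\mathrm{tr}(A) = 0$. Hence at least one eigenvalue is strictly negative. The symmetric argument, assuming all $\lambda_i \leq 0$, forces $\sum_i \lambda_i < 0$ and yields at least one strictly positive eigenvalue. Combining both directions gives the claim.

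I do not foresee any real obstacle here: the proof is a direct application of the trace identity together with the fact that the spectrum of a real symmetric matrix is real. The only mildly delicate point is the treatment of the $A = 0$ edge case, which should be excluded explicitly or handled by noting that the lemma is meant to apply to genuine Ising coupling matrices (where at least one off-diagonal entry is nonzero, hence $A \neq 0$ and therefore at least one eigenvalue is nonzero).
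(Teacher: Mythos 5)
Your proof is correct and uses exactly the same argument as the paper: the trace identity $\sum_i A_{ii} = \sum_i \lambda_i = 0$ forces eigenvalues of both signs once $A \neq 0$. Your explicit handling of the $A = 0$ edge case is a small refinement the paper omits, but the approach is identical.
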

\begin{proof}
We have $\sum_{i} A_{ii} = \sum_{i} \lambda_i = 0$ so there is at least one negative and one positive eigenvalue.
\end{proof}
\item[$\triangleright$] \begin{theorem}[Weak reciprocal: unweighted MAX-CUT case.] If $K$ is equal to minus the adjacency matrix of a graph, then the matrix $\tilde{K} = K + \alpha \Delta$ is symmetric positive semi-definite \textbf{if and only if} $\alpha = 1$.
\label{Lemma3}
\end{theorem}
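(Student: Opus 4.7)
The plan is to reduce the claim to standard facts about the combinatorial Laplacian. Since $K = -A$ for the $(0,1)$ adjacency matrix $A$ of an unweighted graph, $K$ has vanishing diagonal and $|K_{ij}| = A_{ij}$, so the prescribed diagonal offset is $\Delta_{ii} = \sum_{j \neq i} A_{ij} = d_i$, the degree of vertex $i$. Writing $D = \mathrm{diag}(d_1, \dots, d_N)$ for the degree matrix, the object of interest becomes $\tilde K = \alpha D - A$, and the pivotal recognition is that at $\alpha = 1$ this is exactly the graph Laplacian $L = D - A$.

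For the sufficiency direction ($\alpha = 1 \Rightarrow \tilde K \succeq 0$), I would appeal to the classical edge-sum identity
\[ x^T L x = \sum_{\{i,j\} \in E}(x_i - x_j)^2, \]
obtained by a one-line expansion over ordered pairs; this immediately yields $L \succeq 0$, with the all-ones vector $\mathbf 1$ sitting in its kernel on each connected component.

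For the necessity direction, I would test the quadratic form $x \mapsto x^T \tilde K x$ on the all-ones vector to obtain
\[ \mathbf 1^T \tilde K \mathbf 1 \;=\; \alpha \sum_i d_i \;-\; \sum_{i,j} A_{ij} \;=\; 2|E|(\alpha - 1). \]
Provided the graph has at least one edge, $\tilde K \succeq 0$ therefore forces $\alpha \geq 1$. Strictly speaking the lemma then needs a mild re-reading, since for $\alpha > 1$ one has $\tilde K = L + (\alpha - 1)D$, a sum of PSD matrices, hence still PSD (in fact strictly positive definite whenever there are no isolated vertices, because any vector killed by $L$ is locally constant on connected components while any vector killed by $D$ is supported on isolated vertices, and these two subspaces intersect only at $0$). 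Thus $\alpha = 1$ is singled out as the unique threshold at which $\tilde K$ lies on the boundary of the PSD cone---positive semi-definite but singular---which is presumably the intended content of the "if and only if."

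The main obstacle is one of interpretation rather than analysis: both directions collapse to a single line once the Laplacian structure is recognized, and the only genuine subtlety is clarifying that $\alpha = 1$ is the critical value at which the smallest eigenvalue crosses zero, rather than the only value at which PSD holds. I would make this distinction explicit in the write-up, since it dovetails with the paper's surrounding discussion of eigenvalue dropout around $\alpha = 1$ and explains why that threshold plays a distinguished role in the square-root construction of Lemma~\ref{Lemma1}.
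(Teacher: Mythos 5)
Your proof is correct, and it follows the same skeleton as the paper's --- both hinge on recognizing $\tilde K = K + \alpha\Delta$ as the graph Laplacian $L = D - A$ perturbed by $(\alpha-1)D$ --- but the two arguments diverge in how they extract a negative eigenvalue for $\alpha<1$. The paper observes that $K+\Delta$ annihilates the all-ones vector and then invokes Weyl's inequalities on the sum $(K+\Delta) + (\alpha-1)\Delta$, the second summand having only negative eigenvalues; you instead evaluate the quadratic form directly on $\mathbf 1$ and get $\mathbf 1^T\tilde K\mathbf 1 = 2|E|(\alpha-1) < 0$ in one line. Your route is more elementary (no spectral perturbation theory needed) and makes the same implicit non-degeneracy assumption the paper does (the graph must have an edge, equivalently no vertex may be isolated for the Weyl bound to be strict). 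You also do two things the paper omits: you actually prove the ``if'' direction via the edge-sum identity $x^TLx=\sum_{\{i,j\}\in E}(x_i-x_j)^2$ (the paper only asserts the zero eigenvalue of $K+\Delta$ and leaves positive semi-definiteness to Lemma~1), and you correctly flag that the ``only if'' fails for $\alpha>1$, where $L+(\alpha-1)D$ is a sum of PSD matrices. That caveat is resolved by the paper's earlier restriction $\alpha\in[-1,1]$ in that section, but your reading of $\alpha=1$ as the threshold at which the smallest eigenvalue crosses zero is the more honest statement and is worth making explicit.
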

\begin{proof}
Let's first notice that this case still holds for a NP-hard subclass of Ising problems. In this case, $\tilde{K}$'s elements are $-1$ if there is a spin-spin connection and 0 otherwise. Thus, $\tilde{K} = K + \Delta$ has a zero eigenvalue: $KX = 0$ with $X = (1,1,...,1)^T$ (because the sum of each row is equal to 0). We notice that for $\alpha < 1$:
\begin{equation}
K + \alpha \Delta = (K + \Delta) + (\alpha - 1) \Delta.
\end{equation}
As seen before, $K + \Delta$ has a zero eigenvalue, and $(\alpha - 1) \Delta$ only has negative eigenvalues. Using Weyl's inequalities, we can deduce that $K + \alpha \Delta$ has at least one strictly negative eigenvalue. 
\end{proof}
\end{enumerate}
Studying the reciprocal of Lemma 1 gives us a better insight on what is the optimal diagonal offset $\Delta$ to add to the original Ising matrix. We can consider $\Delta$ as a supplementary degree of freedom in our algorithm. The choice of $\Delta$ suggested by this Lemma on diagonal dominance is particularly adapted for $\alpha>0$ and Ising problems with couplings that all have the same sign. However, we notice that for large spin glasses, typically $\sum_{j\neq i} |K_{ij}|$ will be much greater than $K_{ii}$. Thus, $\alpha \sim - 1/N$ is the lower limit for non-zero $J$ matrices. The domain of definition of $\alpha$ over which the number of eigenvalues actually varies is then asymmetric. For this reason, we typically choose $\Delta_{ii} = | \sum_{j \neq i} K_{ij}|$ for large spin glasses. In this study, we will specify which definition of $\Delta$ is chosen for each Figure and study.

\subsection{Tuning the search dimensionality with eigenvalue dropout}
Tuning the eigenvalue dropout gives us a way of tuning the search dimensionality. If $K = UDU^\dagger$ where $U$ is real unitary and $D = \text{Diag}(\lambda_1, ... \lambda_N)$, we can parametrize the phase space in terms of the eigenvectors of $K$, which we denote as $e^j$ (associated to eigenvalue $\lambda_j$): 
\begin{eqnarray}
\vec{\sigma} &=& \sum_j \mu_j e^j, \\
H(\vec{\sigma}) &=& - \frac{1}{2} \sigma^T K \sigma \\
& = & - \frac{1}{2} \sum_j \mu_j^2 \lambda_j.
\end{eqnarray}
We can rephrase the optimization problem as follows:
\begin{equation}
\text{Find} \ \ \ 
\begin{cases}
\text{argmin}_\mu -\frac{1}{2}\sum_j \mu_j^2 \lambda_j \\
\sum_j \mu_j e^j_i = \pm 1
\end{cases} \label{system_opt}
\end{equation}

As only the positive eigenvalues can decrease the energy, we would like to conclude that only the eigenvectors associated with positive eigenvalues will contribute to minimizing the energy. However, one should make sure that the hard constraint in Eq. (\ref{system_opt}) remains satisfied. We can only conclude on the heuristic result that the ground state of the optimization probably will prefer having components in eigenspaces associated with positive eigenvalues. As reducing the eigenvalue dropout level results in dropping out the negative eigenvalues, this explains why we usually observe a better performance for a certain level around $\alpha = 0$.

\subsection{Modified algorithm with tunable offset} 
We define $\Delta^{0}$ as the minimum offset to verify the assumption of Lemma 1 and make sure the modified Ising matrix is symmetric positive semi-definite. $\Delta^{0}$ is defined as:
\begin{eqnarray*}
\Delta^{0}_{ii} &=& \sum_{j\neq i} |A_{ij}|, \\
\Delta^{0}_{ij} &=& 0 \ \ \ \text{for $i \neq j$}.
\end{eqnarray*}
We define the modified algorithm as follows:
\begin{enumerate}
\item[$\triangleright$] $K = \tilde{K} + \alpha \Delta^{0}$ where $\alpha \in [0,1]$
\item[$\triangleright$] $J =  \text{Re} \sqrt{K}$, defined as follows: as $K$ is symmetric, there is a unitary (and real) $U$ such that $K = U \text{Diag}(\lambda_1, ..., \lambda_N) U^{\dagger}$. We define $\sqrt{\lambda_i} = i\sqrt{|\lambda_i|}$ if $\lambda_i < 0$. Then
\[J = U \Big( \text{Diag}(\sqrt{\lambda_1}, ..., \sqrt{\lambda_N}) \Big) U^{\dagger}.\]
\end{enumerate}
This modified algorithm only corresponds to a particular parametrization of the weight matrix and can thus be implemented with the PRIS. The writing of this problem as a function of the coupling matrix eigenvalues (see above) proves that, as long as $\alpha \geq 0$ in the modified algorithm, the ground state of the Ising problem will likely remain unchanged (see Figure \ref{dropout_SI}).

\newpage
\section{Supplementary Note 4: Numerical simulations}

\begin{figure}
\begin{center}
\includegraphics[scale=0.55]{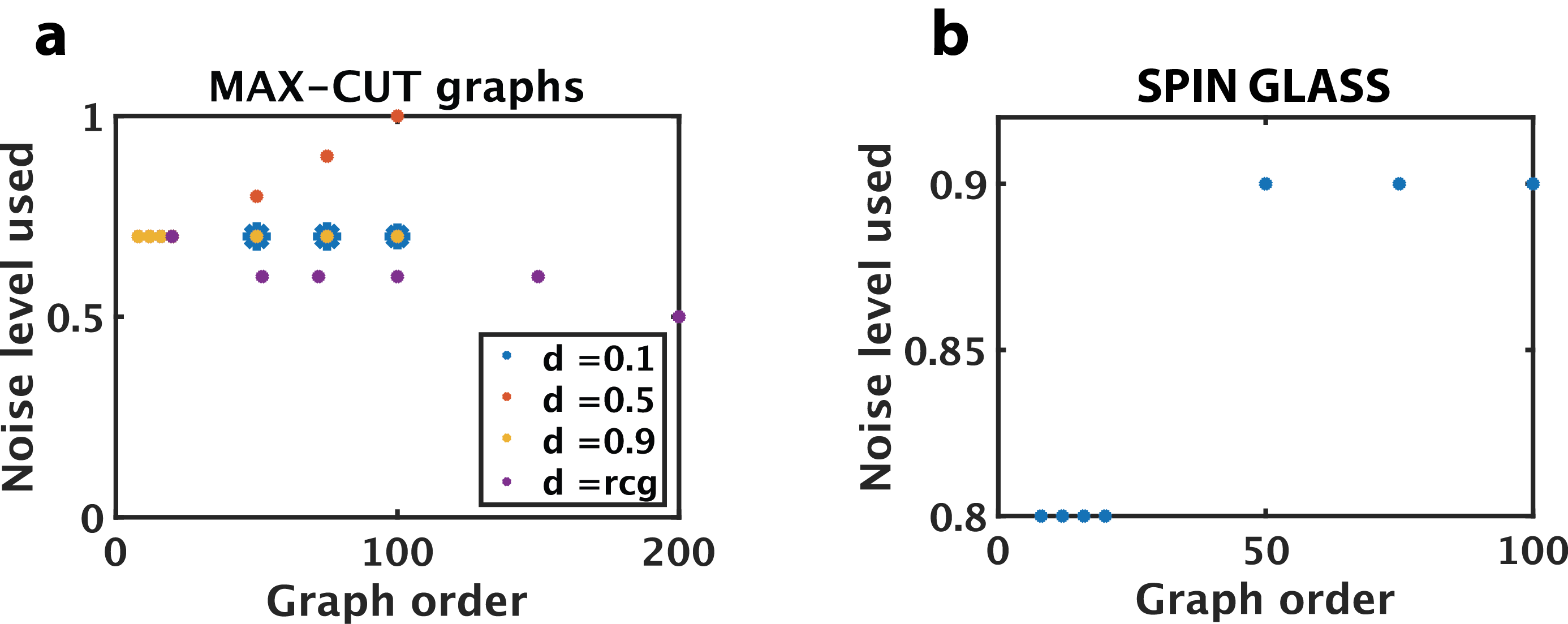}
\caption{\textbf{Optimized noise level used in Figure 2 (main text).} \textbf{a:} for MAX-CUT graphs (Figure 2c in the main text). \textbf{b:} for spin glasses (Figure 2b in the main text). }
\label{phi_data_fig3}
\end{center}
\end{figure}

\begin{figure}
\begin{center}
\includegraphics[scale=0.6]{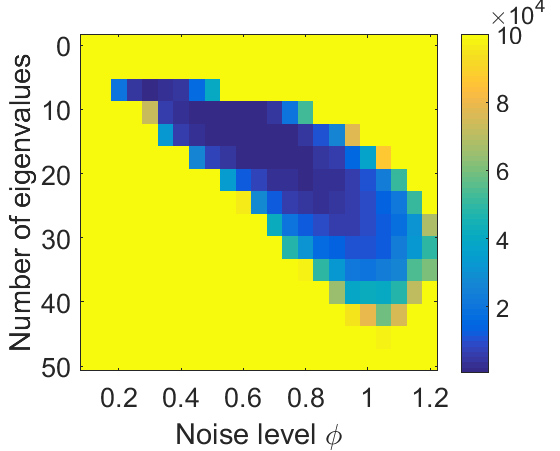}
\caption{\textbf{Conjugated influence of eigenvalue dropout and noise levels for finding the ground state.} For each tuple ($\alpha$, $\phi$), the PRIS is run 100 times and $N_{\text{iter}, 90\%}$ is plotted (the ground state is pre-computed with BiqMac \cite{Rendl2010}). The colorbar is saturated at $10^5$ iterations in order to show the valley of optimal $(\alpha, \phi)$ values. In this Figure, we study a MAX-CUT graph with density $d=0.5$ and $\Delta_{ii} = \sum_{j \neq i} |K_{ij}|$.}
\label{2dsweep_dropout_phi}
\end{center}
\end{figure}

\begin{figure}
\begin{center}
\includegraphics[scale=0.5]{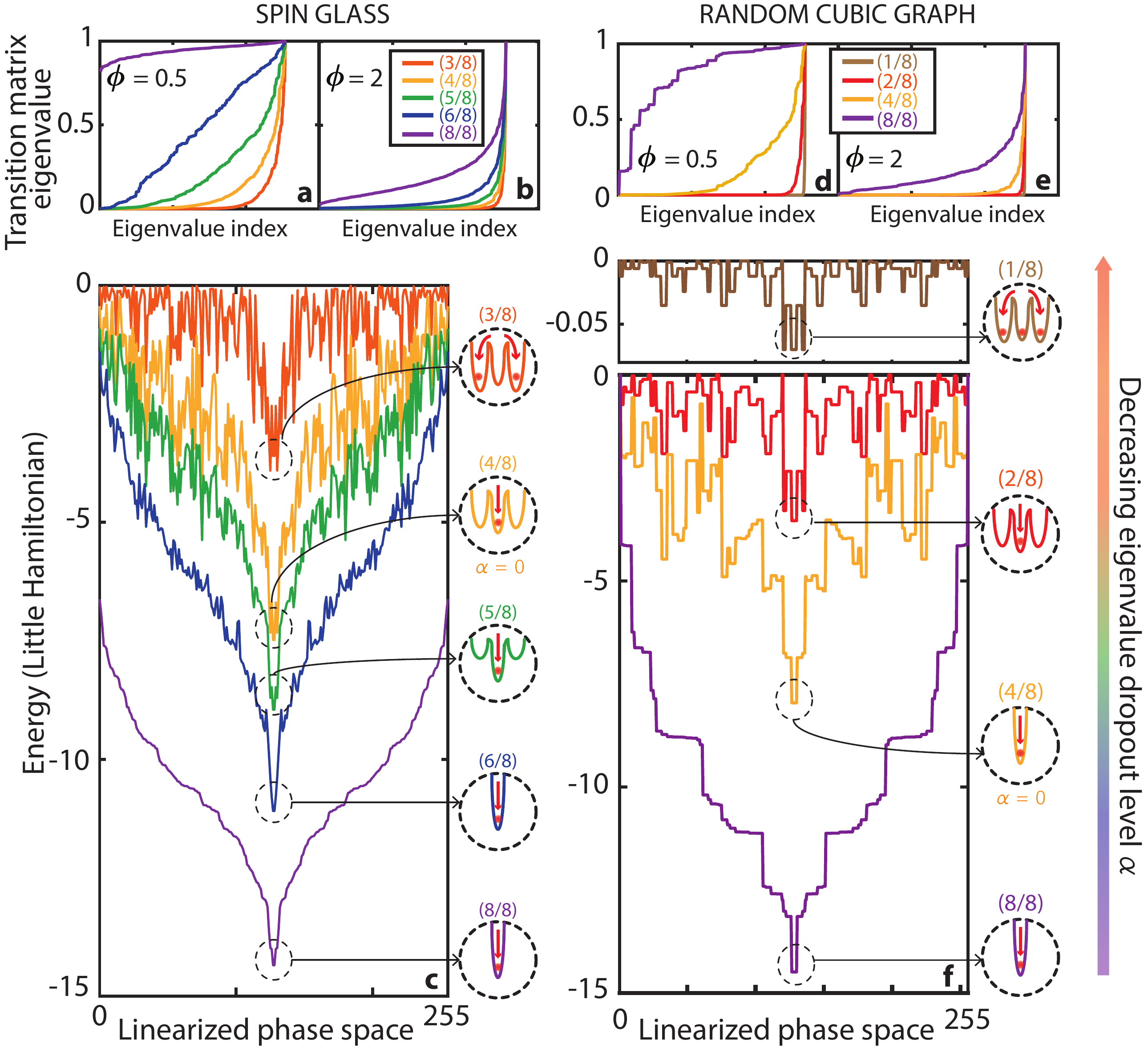}
\caption{\textbf{Influence of dropout} (a-c) for a random spin glass and (d-f) for a sample random cubic graph with $N = 8$ and $\phi = 0.5$. (a, b, d, e): Eigenvalue distribution for various noise levels. (c, f): Energy landscape (Hamiltonian) plotted over linearized phase space for various eigenvalue dropout levels $\alpha$. For this Figure, we set $\Delta_{ii} = \sum_{j} |K_{ij}|$.}
\label{dropout_SI}
\end{center}
\end{figure}

\subsection{Evaluating sampling performance and critical parameters on two-dimensional ferromagnetic and infinite-range Ising models}

\begin{figure}
\begin{center}
\includegraphics[scale=0.5]{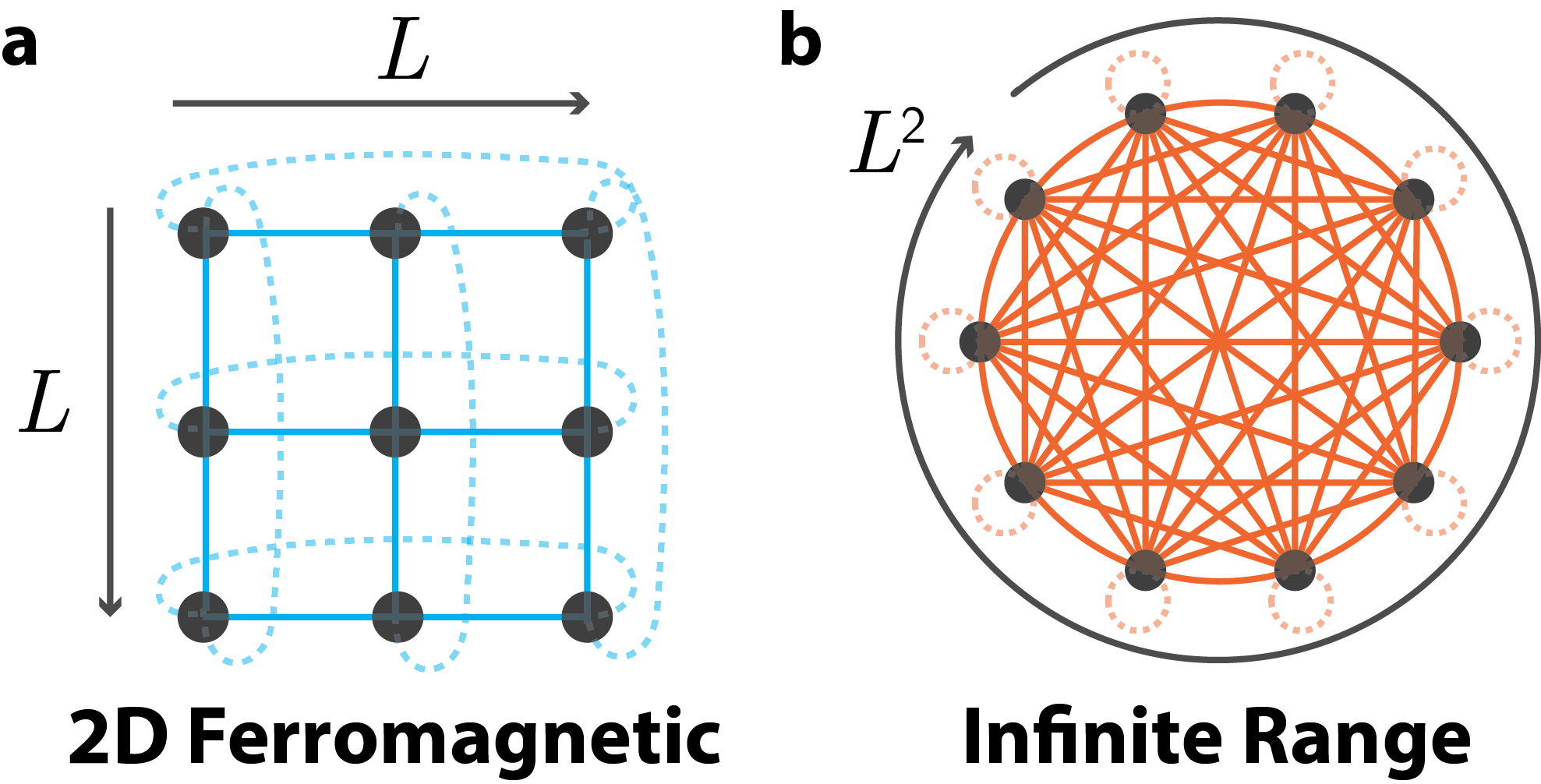}
\caption{\textbf{Examples of analytically solvable Ising models studied in our benchmark.} (a) Two-dimensional ferromagnetic Ising model and (b) infinite-range (mean-field) Ising model.}
\label{benchmark_examples}
\end{center}
\end{figure}

In the main text, we evaluate the performance of the PRIS on sampling the Gibbs distribution of analytically solvable Ising models and estimating their critical exponents. We chose the following two problems:
\begin{enumerate}
\item[$\triangleright$] Two-dimensional ferromagnetic Ising model on a square lattice with periodic boundary conditions. This model was first analytically solved by Onsager \cite{Onsager1944}. Its energy is defined as 
\begin{equation}
H = - \frac{1}{2} \sum_{\langle ij \rangle } \sigma_i \sigma_j ,
\end{equation}
where $\langle ij \rangle$ refers to nearest neighbors $(i,j)$ under periodic boundary conditions (see Figure \ref{benchmark_examples}).
\item[$\triangleright$] Infinite-range Ising model, where each node is connected with a positive coupling of $1/N$ to all its neighbors, including itself. This model can be analytically solved by mean field theory \cite{Friedli2017}. Its energy is defined as 
\begin{equation}
H = - \frac{1}{2N} \sum_{ij} \sigma_i \sigma_j = - \frac{1}{2N} (\sum_{i} \sigma_i)^2.
\end{equation}
\end{enumerate}

\begin{figure}
\begin{center}
\includegraphics[scale=0.5]{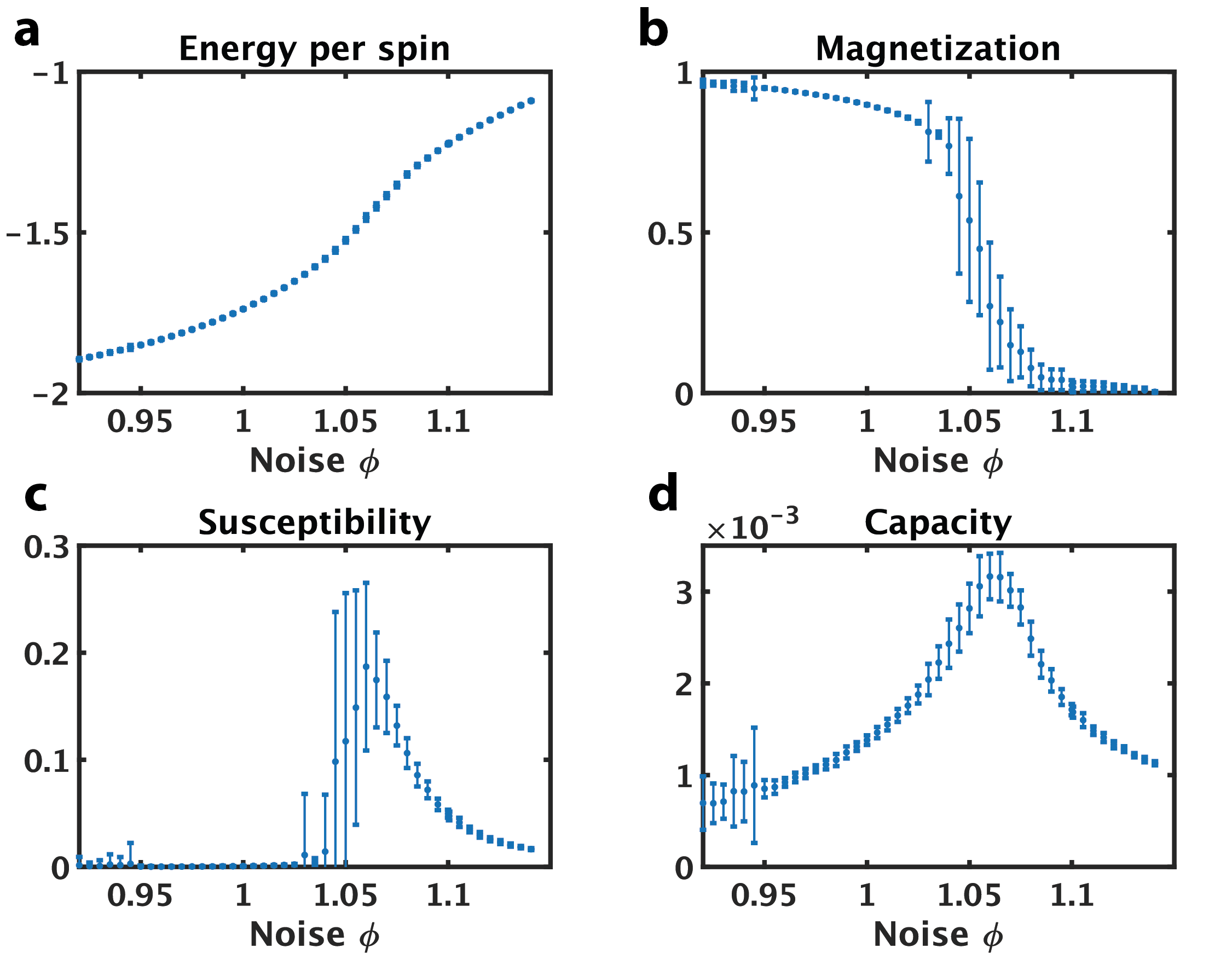}
\caption{\textbf{Physical observables of the two-dimensional Ising model obtained with $L = 36$ square-rooted PRIS.}}
\label{36by36_sq}
\end{center}
\end{figure}

In order to measure the critical exponents, we need to make sure the free energy at all temperatures is equivalent to that of the corresponding Ising model \textit{at all temperatures.}  It has been shown that the free energy of a Little network with coupling matrix $K$ equals that of an Ising model defined by the same coupling matrix $K$, when the coupling weights are configured to learn a set of stable configurations (associative memory) \cite{Amit1985}. However, this network has been discarded because of the possible existence of loops between some of its degenerate states, which can result in odd dynamics of the system, if no additional precautions are taken (see, for instance, Figure \ref{nosq_alg}, where this simple algorithm actually converges to the \textit{maximal} energy). We observe that adding a diagonal offset to the coupling, as we do in the square-root version of this algorithm, suppresses these odd dynamics (Figure \ref{nosq_alg}(b)).

The algorithm described earlier to find the ground state of Ising problems could also be used to measure critical exponents, as can be seen in Figure \ref{36by36_sq}. However, there are some complications that arise:
\begin{enumerate}
\item[$\triangleright$] Taking the square root of the coupling matrix prevents the use of symmetry and sparsity to reduce the algorithm complexity. Thus, for large graphs, the time needed to make a single matrix multiplication becomes quite large on a CPU.
\item[$\triangleright$] Taking the square root also modifies the coupling amplitude. It is now unclear how the temperature of the system should be defined (which is a problem if we want to estimate the critical temperature with the PRIS). From the large-noise expansion of the Hamiltonian, we should define the effective temperature as $T = k^2 \phi^2$. However, we observe this definition does not match the theoretical value of the critical temperature of the 2D Ferromagnetic Ising model. 
\end{enumerate}

\begin{figure}
\begin{center}
\includegraphics[scale=0.5]{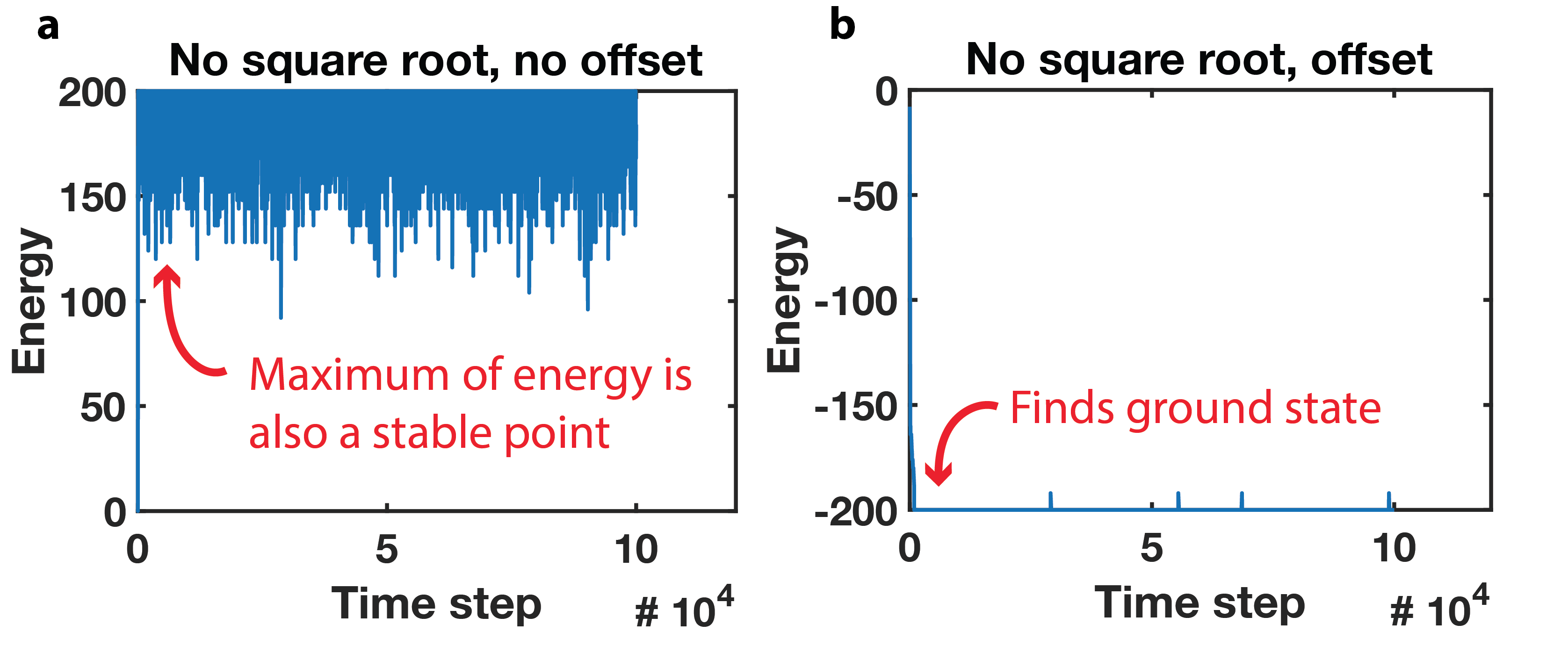}
\caption{\textbf{Comparison of different versions of the algorithms on the 2D Ferromagnetic Ising Model.} (a) With no offset, the no square root algorithm (defined by the following regime of operation of the PRIS: $C = K$) results in convergence to the state with maximal energy, for some runs (with random initial conditions). (b) Adding an offset to the same regime of operation cancels this behavior and the algorithm always converges to the ground state after some time.}
\label{nosq_alg}
\end{center}
\end{figure}

For both Ising problems studied, we perform the following analysis:
\begin{enumerate}
\item[$\triangleright$] We first estimate the Binder cumulant $U_4$ (see definition in the main text) as a function of the system temperature, for various graph sizes $N = L^2$. The cumulants $U_4$ plotted for different $L$ intersect at the critical temperature \cite{Landau2009} (see Figure 4(a) in the main text).
\item[$\triangleright$] We then estimate the dependence on the linear dimension $L$ of various observables at the critical temperature and deduce the corresponding critical exponent. This is enabled by the scaling law of observables at the critical temperature \cite{Landau2009}:
\begin{eqnarray}
m & \sim & L^{-\beta_C/\nu_C}, \\ 
\chi & \sim & L^{\gamma_C/\nu_C}, \\
\tau_\text{auto}^E & \sim & L^{z^E_C}, \\
\tau_\text{auto}^m & \sim & L^{z^m_C},
\end{eqnarray}
where $m, \chi, \tau_\text{auto}^E,$ and $\tau_\text{auto}^m$ are respectively the magnetization, the magnetic susceptibility, the energy autocorrelation time and the magnetization autocorrelation time \cite{Landau2009}. To estimate the critical exponents, we run the algorithm 120 times for $10^5$ iterations, with random initial conditions, at the critical temperature for each $L$ and fit these observables with a power law in $L$.
\end{enumerate}
The estimates of all observables are obtained by taking every $\tau^E_\text{auto}$ generated samples, where $\tau^E_\text{auto}$ is the energy autocorrelation time, and dropping the first $10\%$ of the data (arbitrary burn-in or equilibrium time). Our findings are summarized in Table \ref{criticalexp_table}, Figures \ref{fits_2d_ferro} and \ref{fits_infinite_range}, and are benchmarked versus the Metropolis-Hastings algorithm, which is summarized in Refs.\cite{Landau2009, Metropolis1953, Hastings1970}.

\begin{figure}[h!]
\begin{center}
\includegraphics[scale=0.75]{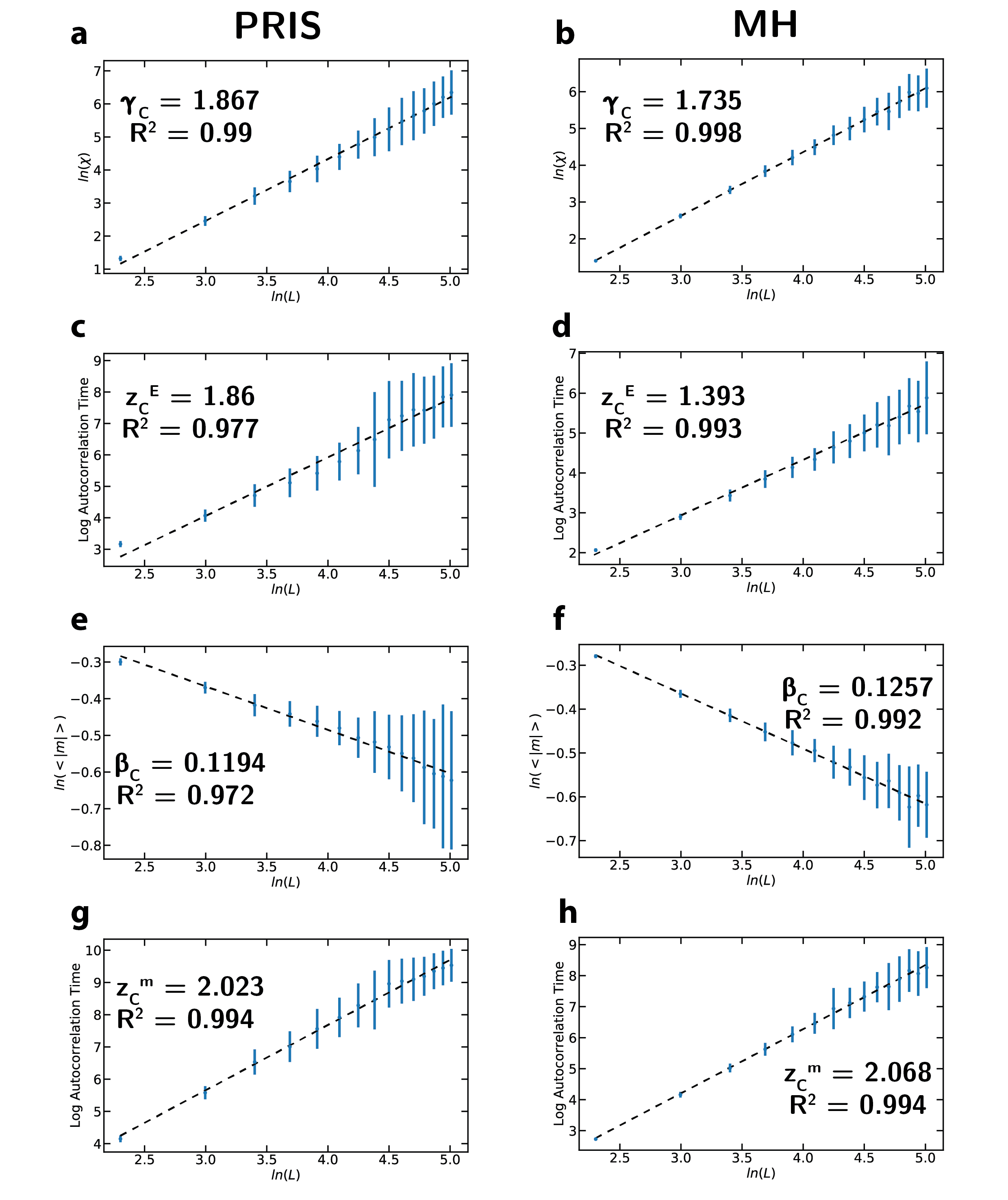}
\caption{\textbf{Probing the critical exponents of the 2D Ferromagnetic Ising model.} Fits are shown with the resulting critical exponent for the PRIS \textbf{(a, c, e, g)} and the MH \textbf{(b, d, f, h)} algorithms for the susceptibility \textbf{(a-b)}, energy autocorrelation time \textbf{(c-d)}, magnetization \textbf{(e-f)}, and magnetization autocorrelation time \textbf{(g-h)}.}
\label{fits_2d_ferro}
\end{center}
\end{figure}

\begin{figure}[h!]
\begin{center}
\includegraphics[scale=0.75]{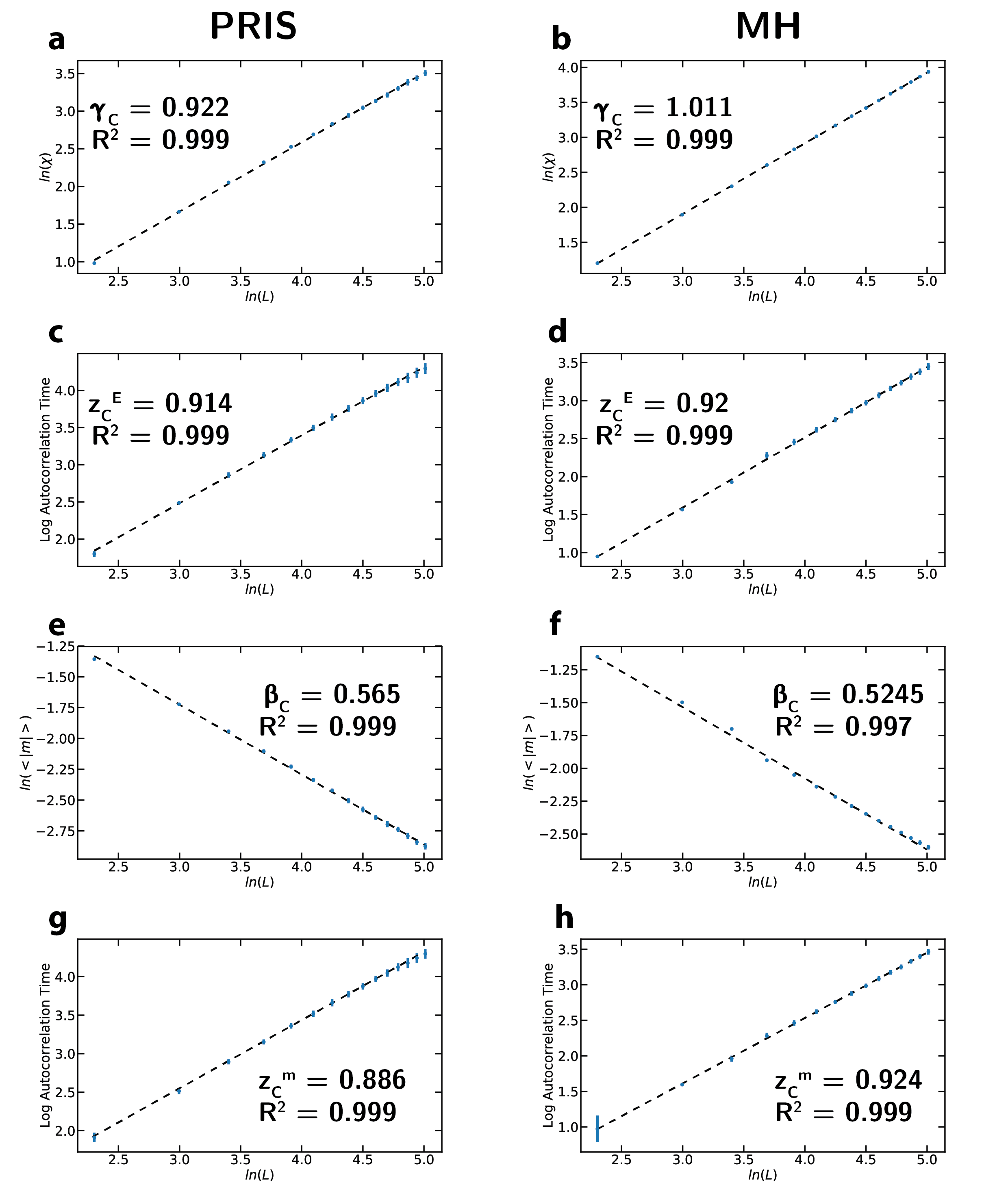}
\caption{\textbf{Probing the critical exponents of the infinite-range Ising model.} Fits are shown with the resulting critical exponent for the PRIS \textbf{(a, c, e, g)} and the MH \textbf{(b, d, f, h)} algorithms for the susceptibility \textbf{(a-b)}, energy autocorrelation time \textbf{(c-d)}, magnetization \textbf{(e-f)}, and magnetization autocorrelation time \textbf{(g-h)}.}
\label{fits_infinite_range}
\end{center}
\end{figure}

\begin{table}
\begin{center}
 \begin{tabular}{||c | c  c |c c |c c |c c||} 
 \hline
 Algorithm & $\beta_C$ & $R^2$ & $\gamma_C$ & $R^2$ & $z^E_\text{C}$ & $R^2$ & $z^m_\text{C}$ & $R^2$ \\ [0.5ex] 
 \hline\hline
       \multicolumn{9}{||c||}{\textbf{2D ferromagnetic Ising model}} \tabularnewline
       \hline
  MH & 0.1257& 0.992& 1.735&0.998 &1.393 &0.993 &2.068 &0.998 \\
  \hline
  PRIS & 0.1194& 0.972& 1.867& 0.990& 1.860& 0.977& 2.023& 0.994\\ [0.5ex] 
  \hline\hline
       \multicolumn{9}{||c||}{\textbf{Infinite-range Ising model}} \tabularnewline
       \hline
  MH & 0.5245& 0.997& 1.011& 0.999& 0.920& 0.999& 0.924& 0.999\\
  \hline
  PRIS & 0.5650& 0.999& 0.922& 0.999& 0.914& 0.999&0.886 &0.999 \\ [0.5ex] 
 \hline
\end{tabular}
\end{center}
\caption{\textbf{Summary of critical exponents measured with the PRIS and MH.} $R^2$ is the coefficient of determination of each power law fitting.}
 \label{criticalexp_table}
\end{table}

\newpage
\section{Supplementary Note 5: Scaling of the PRIS performance on several photonic architectures}
\subsection{Cascaded arrays of programmable Mach-Zehnder interferometers (MZI)}

\begin{figure}[h]
\centering
\includegraphics[width = 0.8\columnwidth]{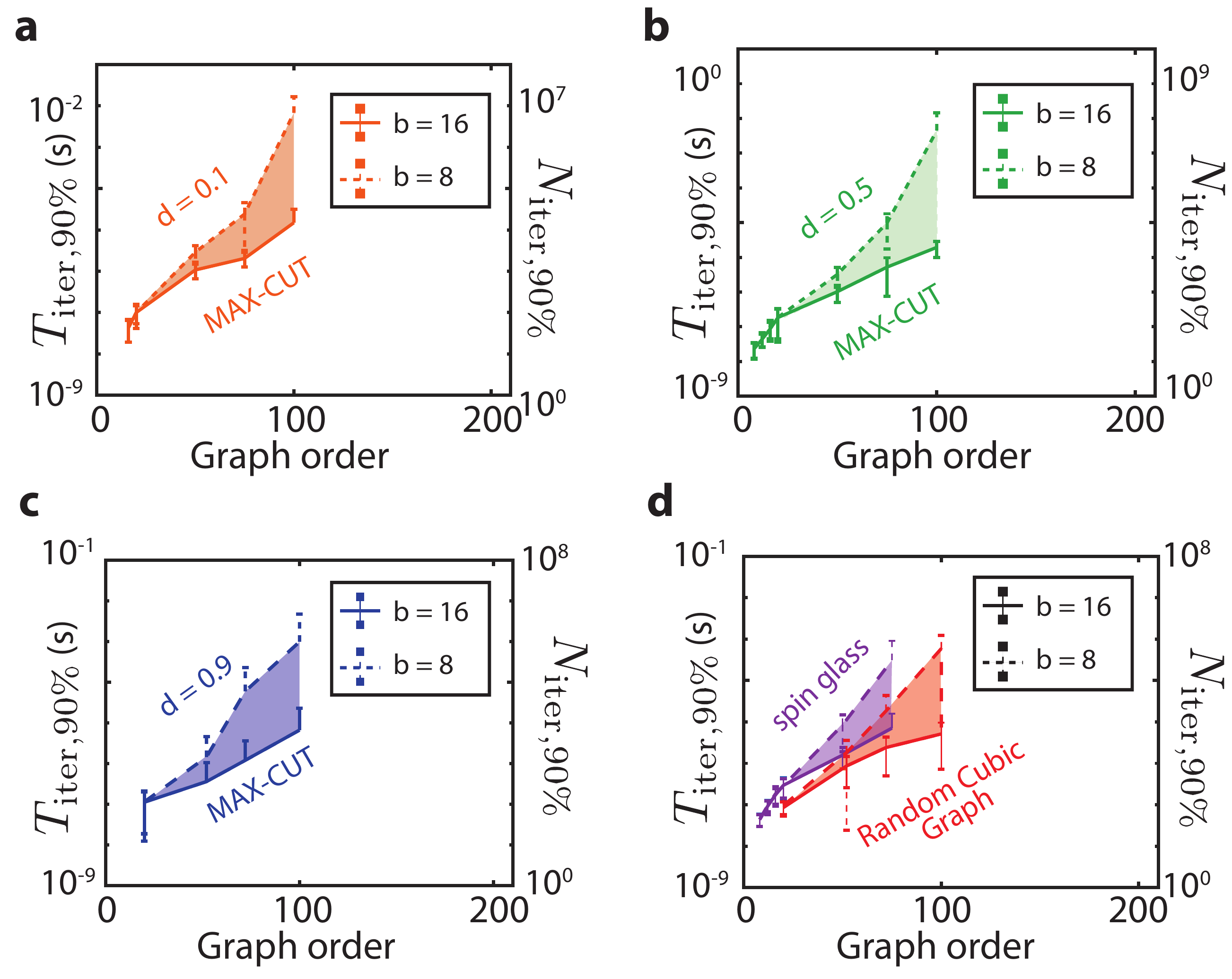}
\caption{\textbf{Simulated scaling of PRIS on a cascaded array of programmable MZIs.} Number of algorithm steps to reach a ground state with a probability of 90\% is plotted as a function of the graph order for various graph topologies: MAX-CUT graphs with densities $d = 0.1$ \textbf{(a)}, $d = 0.5$ \textbf{(b)}, and $d = 0.9$ \textbf{(c)}, Random Cubic Graphs (RCG), and spin glasses \textbf{(d)}.}
\label{fig:scaling}
\end{figure}

In Figure \ref{fig:scaling}, we examine the influence of the bit accuracy of the setting of the phases on the performance of the machine. This can describe any system where the matrix is encoded by a cascaded array of programmable MZIs such as \cite{Shen2017a,Harris2017}. We assume the phase can be set with $b$-bit accuracy, which means that when setting phase $\theta_m$ (resp. $\phi_m$) on the PNP, we actually draw from a uniform distribution $\theta \in [\theta_m-\frac{2\pi}{(2\pi)^b};\theta_m+\frac{2\pi}{(2\pi)^b}]$ (resp. $\phi \in [\phi_m-\frac{2\pi}{(2\pi)^b};\phi_m+\frac{2\pi}{(2\pi)^b}]$). 

Another way to describe the bit accuracy of the PNP is through the bit-accuracy of the voltage setting \cite{Shen2017a}. The phase-voltage relation can be approximated by a quadratic dependence: $ \theta = 2\pi\left(\frac{V}{V_{2\pi}}\right)^2$ where $V_{2\pi}$ is the voltage setting to achieve $2\pi$ modulation \cite{Harris2014}. If we assume the voltage is set with $b_V$ accuracy, then the actual voltage is drawn from a (uniform) distribution over $[V-\frac{V_{2\pi}}{2^{b_V}};V+\frac{V_{2\pi}}{2^{b_V}}]$. This translates to the phase accuracy as:
\begin{eqnarray}
\theta \pm \Delta \theta &=& 2\pi\left(V\pm \frac{V_{2\pi}}{2^{b_V}}\right)^2/V_{2\pi}^2\\
&=& \theta \pm \frac{\sqrt{2\pi\theta}}{2^{b_V-1}} + \frac{2\pi}{4^{b_V}}
\end{eqnarray}
We can safely neglect the last term and we notice that the worst case scenario corresponds to $\theta \sim 2\pi$, for which $b_V -1 = b_\theta$. A rule of thumb results: a $b_V$-bit accuracy of the PNP on its voltage setting corresponds to a $(b_V-1)$-bit accuracy of the PNP on its phase setting. Inversely, a $b_\theta$-bit accuracy of the PNP on its phase setting corresponds to a $(b_\theta+1)$-bit accuracy of the PNP on its voltage setting.

Static sources of noise could be a significant bottleneck in scaling the PRIS to large $N \sim 100$ graph orders. For instance, a static noise on the phase setting of an array of MZI will result in a static error on the effective coupling between spins, thus reshaping the Hamiltonian landscapes, which could impact the algorithm efficiency. We simulate the algorithm performance as a function of the graph order $N$ for phase resolutions of $b_\theta = 8$ and $16$ bits. The resulting time on a GHz photonic architecture to find the ground state with $90 \%$ chance, $T_{\text{iter}, 90\%}$, is also shown in Figure \ref{fig:scaling}. While a 16-bit phase resolution does not impact the algorithm performance (with scaling results comparable to an ideal photonic network, see main text), an 8-bit phase resolution may increase the required number of algorithm steps by one to two orders of magnitude, depending on the graph order and topology (while still outperforming other photonic systems on a GHz architecture, such as \cite{McMahon2016, Hamerly2018}). Thus, the reduction of static noise is of paramount importance in the realization of the PRIS on large-scale static photonic networks. 

\subsection{Optical Neural Networks based on Photoelectric Multiplication}
For larger graphs $N \sim 10^3 - 10^6$, one could resort to recently-proposed large-scale optical neural networks based on photoelectric multiplication \cite{Hamerly2018b}. By encoding both matrix weights $C_{ij}$ and input signals $S^{(t)}_i$ into optical (time) domain, the measured output is added a Gaussian noise term with amplitude defined by the number of photons per Multiply And Accumulate (MAC):
\begin{equation}
S^{(t+1)}_i = \text{Th}_\theta \left( \sum_j C_{ij} S_j^{(t)} + w_i^{(t)} \frac{|| C || || S^{(t)}||}{N^{3/2}}  \frac{\sqrt{N}}{\sqrt{n_\text{mac}}}\right).
\end{equation}
For the various problems we study in this paper, the working standard deviation is usually $\sim 1$. This corresponds to a working $n_\text{mac}$ of
\begin{equation}
n_\text{mac} \sim N \frac{||C||^2 ||S^{(t)}||^2}{N^3}
\end{equation}
We can evaluate the corresponding total energy consumption per matrix multiplication for 10 random spin glasses. We get $n_\text{mac} \sim 4$ (resp. $\sim 15$) for $N = 100$ (resp. $N =1,000$). Smaller working $n_\text{mac}$ will be required for sparser graphs, since $||C||$ is smaller. The corresponding SNR scales as $\sim n_\text{mac}$ and the total energy is $N^2 n_\text{mac} \sim 6.2 \pm 0.35$ fJ/matrix multiplication (resp. $1.9 \pm 0.5$ pJ/matrix multiplication).

There are many attractive features of these networks for the implementation of large-scale PRIS:
\begin{enumerate}
\item[$\triangleright$] This architecture naturally leverages quantum noise which perturbs the output as is required for the good execution of the PRIS. The noise level can be tuned by changing the number of photons per MAC which is proportional to the SNR. 
\item[$\triangleright$] The non-linear function is executed in electronic domain, which allows a lot of flexibility on its implementation and reconfiguration (the threshold function required for the good operation of the PRIS would be straightforward to implement), while optical nonlinearities working at low-power have not been demonstrated so far. 
\item[$\triangleright$] This architecture is in principle scalable to very large number of spins $N \sim 10^6$.
\end{enumerate}

\subsection{Free space optical Neural Networks}

Since the PRIS relies on a static transformation, the use of free-space optical neural networks \cite{Farhat1985, Lin2018} with 3D printed masks or reconfigurable SLMs is another option to achieve PRIS with $N \sim 10^6$ neurons. The analysis we made on the influence of heterogeneities (see Figure \ref{fig:scaling}) remains relevant here. A set of lens - mask - lens would realize the matrix multiplication on the signal $S^{(t)}$ encoded in optical domains, while the coupling matrix $C$ is encoded in the mask transmission. The speed of such a free-space architecture would only be limited by the photodetector (typically $\sim 10$ THz) and modulation (typically $\sim 1$ kHz for SMLs, $\gtrapprox 1$ GHz ) bandwidths.

\subsection{Comparison table of various heuristic algorithms and architectures}

\begin{table}
\begin{center}
 \begin{tabular}{||c c c c c ||} 
 \hline
Algorithm & Algorithm step & Architecture & Algorithm step time estimate with $N = 100$ & Time Complexity \\
 \hline\hline
MH & Sequential & CPU & $\sim 1$ ms$^{*}$ & $O(N^2)$ \\
 \hline
\multirow{ 5}{*}{PRIS} & \multirow{ 4}{*}{Parallelizable} & CPU & $\sim 10-100 \mu$s$^{*}$ & $O(N^2)$ \\
 &  & FPGA & $\sim 5-100$ns$^{**}$ & $O(N^2/M)$ \\
&  & MZI network \cite{Shen2017a, Harris2017}  & $\sim 0.1 - 1$ns & $O(N)$\\
& & \multirow{ 2}{*}{Free space optics \cite{Hamerly2018, Pierangeli2019}} & modulation bandwidth limited & \multirow{2}{*}{$O(1)$} \\
& & & (SLM $\sim 1$ms, electro-optic modulators $\sim 0.1 - 1$ns) & \\
 \hline
 \end{tabular}
\end{center}
\caption{\textbf{Comparative table of projected performance of various heuristic algorithms implemented on various architectures.} Time complexity is given as a function of the total number of spins N. Regarding clock estimates: $^{*}$ Estimate run on a 2.7 GHz Intel Core i5 with Matlab, $^{**}$ Estimate run on a Xilinx Zynq UltraScale+ MPSoC ZCU104 with a systolic array architecture. See Supplementary Note 7 for more details.}
 \label{perf_comp}
\end{table}

Since the PRIS essentially relies on fast vector-to-fixed matrix multiplication, it can be implemented efficiently on various photonic and parallel electronic hardware architectures, such as FPGAs. We summarize in Table \ref{perf_comp} the projected performance of various heuristic algorithms (MH and PRIS) running on several hardwares (both electronics and photonics). We observe that, while photonics potentially allows the fastest clock for such algorithms and a competitive scaling factor, FPGAs can achieve similar clocks and require much less engineering (they can be bought off the shelf and are easily reconfigurable). To demonstrate our point, we perform a proof-of-concept experiment on a FPGA board, whose results are shown in Supplementary Note 7.

If photonic architectures can significantly reduce the time complexity of the algorithm step by performing massive multiplexing\cite{Pierangeli2019}, it must be noted that we are neglecting time overhead such as fabrication, etc. Also, some free-space architectures, such as SLM, are typically slow to reconfigure (on the order of 1ms), thus only being relevant for multiplying very large matrices.  

\color{black}

\newpage
\section{Supplementary Note 6: Comparison of the PRIS to several (meta)heuristics}
\subsection{Benchmarking versus Metropolis-Hastings on large spin glasses $N\sim 1000$}

\begin{figure}
\begin{center}
\includegraphics[scale=0.4]{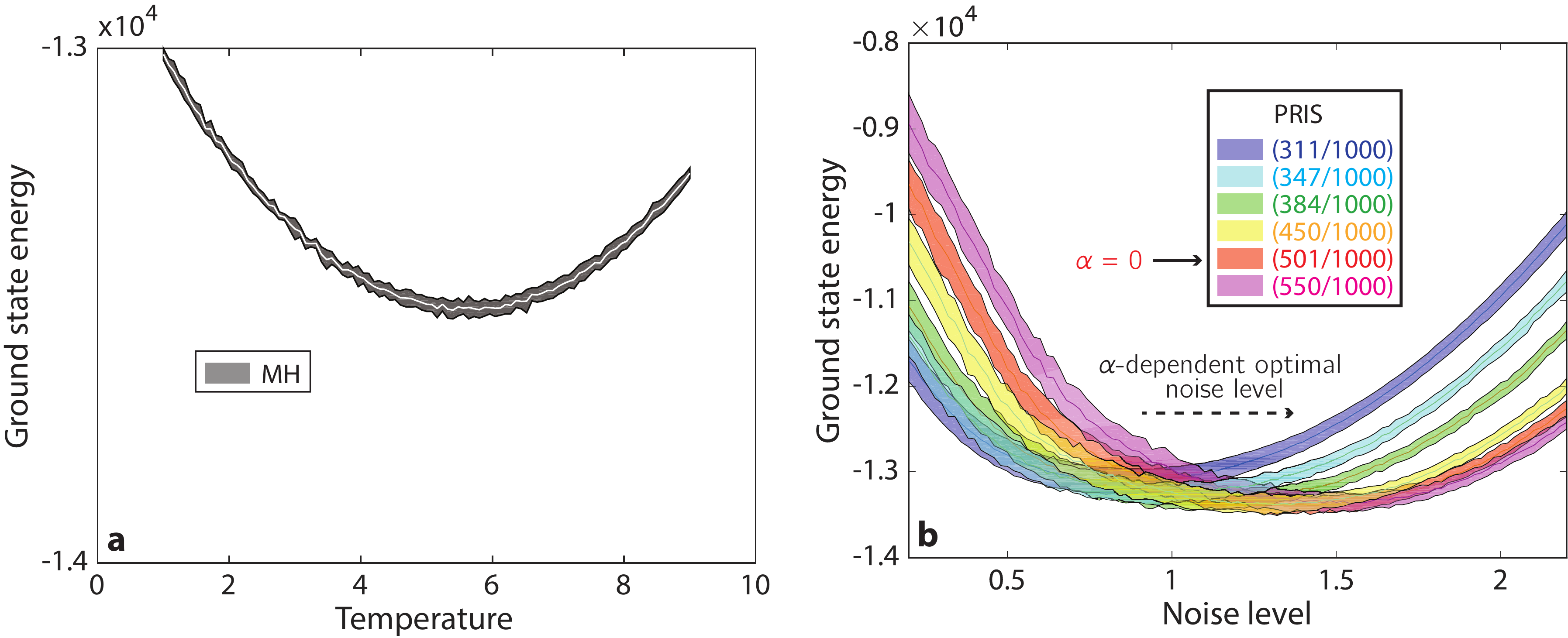}
\caption{\textbf{Benchmarking the PRIS versus MH on large spin glasses $N = 1000$.} For both PRIS and MH, the algorithm is ran 100 times for 10,000 iterations at each temperature / noise level. The line shows the average ground state energy over 100 runs, and the shaded area corresponds to $\pm$ the standard deviation. PRIS is ran for various dropout levels corresponding to $(N_\text{eig}/N)$ eigenvalues (see legend). Results for MH and PRIS at the smallest eigenvalue dropout level (311/1000) are also averaged over the 10 random spin glasses. Results for the PRIS at all others eigenvalue dropout levels are only averaged over spin glass 1 (see discussion). For this study, we choose $\Delta_{ii} = |\sum_j K_{ij}|$.}
\label{sg_N1000}
\end{center}
\end{figure}

In order to evaluate the performance of the PRIS on larger graphs, for which exact solvers typically fail, we benchmark the PRIS against MH for a set of 10 random spin glasses (whose couplings are randomly chosen from a uniform distribution in the interval $[-1, 1]$). First, we notice that the behavior probed by the PRIS and MH is very similar over the 10 spin glasses. The temperature/noise-dependent mean ground state energy found by MH and the PRIS is shown in Figure \ref{sg_N1000}. In particular, we see that the PRIS achieves mean ground state energies similar to MH. As a reminder, a heuristic mapping between the temperature from statistical physics and the effective temperature (noise level) of the PRIS is $T \sim (k\phi)^2$. These graphs orders are large enough, so that standard exact solvers are taking a long time to find the ground state. For instance, we ran BiqMac on spin glass 1 on their online submission server\cite{Rendl2010} for three hours. The algorithm could only find an approximate ground state which was outperformed by MH, PRIS-A, and SA. 

We also observe that the PRIS shows an $\alpha$-dependent optimal noise level, which increases with $\alpha$. For spin glass 1, the absolute lowest ground state energy is obtained for $\alpha = -0.2$, i.e. (450/1000) eigenvalues. This hints at the necessity of optimizing the hyperparameter $\alpha$ around $\alpha \sim 0$ when running the PRIS on large graphs. 

\subsection{PRIS-A: a proposed metaheuristic variation of the PRIS}

\begin{table}
\begin{center}
 \begin{tabular}{||c c c c c c||} 
 \hline
Instance & PRIS & MH & PRIS-A & SA  & BiqMac \\
 \hline\hline
spin glass 1 & -13,716 & -13,796 & -13,795 & -13,814 & -13,746 \\
 \hline
spin glass 2 & -13,632 & -13,736 & -13,716 & -13,754 & N.C.\\
 \hline
spin glass 3 & -13,717 & -13,777 &-13,760 & -13,796 & N.C.\\
 \hline
spin glass 4 & -13,678 & -13,768 &-13,781 & -13,798& N.C.\\
 \hline
spin glass 5 & -13,732 & -13,755 &-13,767 & -13,782 & N.C.\\
 \hline
 spin glass 6 & -13, 752 & -13,828 &-13,804 & -13,832 & N.C.\\
 \hline
 spin glass 7 & -13,723 & -13,792 &-13,758 & -13,800 & N.C.\\
 \hline
 spin glass 8 & -13,731 & -13,769 &-13,781 & -13,783 & N.C.\\
 \hline
 spin glass 9 & -13,711 & -13,810 &-13,798 & -13,817 & N.C.\\
 \hline
 spin glass 10 & -13,754 & -13,846 & -13,822 & -13,855 & N.C.\\
 \hline
 \end{tabular}
\end{center}
\caption{\textbf{Summary of benchmarking PRIS and PRIS-A against MH and SA.} For both PRIS and MH, the algorithm is ran 100 times for 10,000 iterations at each temperature / noise level. The table shows the absolute lowest ground state energy recorded. For PRIS-A and SA, the algorithm is ran 100 times with $N_\text{alg, iter}$ temperature increments (as given by Eq.\eqref{eq:N_alg_iter}) and $N_\text{iter per temp.}=100$. For PRIS-A (resp. SA), the initial noise level (resp. temperature) is $\phi_i = 50$ (resp. $T_i = 5,000$), the final noise level is $\phi_f = 0.1$ (resp. $T_f = 0.01$) and the temperature geometric factor is $\lambda = \sqrt{0.991}$ (resp. $\lambda = 0.991$). For PRIS-A, the eigenvalue dropout level is taken to be $\alpha = 0$, corresponding to (501/1000) eigenvalues. For BiqMac, the algorithm ran for three hours (time limit on the BiqMac online job submission platform). N.C. = Non Computed.}
 \label{pris_perf}
\end{table}

A route to achieving systematically low energy states is to design metaheuristics, i.e. master strategies guiding the search for an optimal ground state. Simulated Annealing (SA) \cite{Kirkpatrick1983} is one of such algorithms, derived from Metropolis-Hastings. Let us reminder the reader of one possible implementation of this algorithm with the widely used geometric schedule for the temperature \cite{Kirkpatrick1983, Nourani1998, Glover1998, Cohn1999}: 

\begin{algorithm}
\begin{algorithmic}
\STATE Start from random initial state
\STATE Choose initial temperature $T_i$ and geometric factor $\lambda <1$
\STATE $T \leftarrow T_i$
\STATE \textbf{for all} $i \in \{1, ..., N_\text{alg, iter} \}$
\STATE \hskip1em $T \leftarrow \lambda T$
\STATE \hskip1em \textbf{for all} $j \in \{1, ..., N_\text{iter per temp.} \}$
\STATE \hskip2em Update state according to MH acceptance rule at temperature $T$
\STATE \hskip1em \textbf{end for}
\STATE \textbf{end for} 
\end{algorithmic}
\caption{Simulated annealing algorithm with a geometric schedule.}
\label{sa_algo}
\end{algorithm}

This naturally inspires a metaheuristic based on the PRIS, which we call Photonic Recurrent Ising Simulated Annealing (PRIS-A):
\begin{algorithm}
\begin{algorithmic}
\STATE Start from random initial state
\STATE Choose initial noise level $\phi_i$ and geometric factor $\lambda <1$
\STATE $\phi \leftarrow \phi_i$
\STATE \textbf{for all} $i \in \{1, ..., N_\text{alg, iter} \}$
\STATE \hskip1em $\phi \leftarrow \lambda \phi$
\STATE \hskip1em \textbf{for all} $j \in \{1, ..., N_\text{iter per temp.} \}$
\STATE \hskip2em Update state according to PRIS acceptance rule at noise level $\phi$
\STATE \hskip1em \textbf{end for}
\STATE \textbf{end for} 
\end{algorithmic}
\caption{Photonic Recurrent Ising Simulated Annealing (PRIS-A) algorithm.}
\label{sa_algo}
\end{algorithm}

Let us note a couple of peculiarities: 
\begin{enumerate}
\item[$\triangleright$] The noise level from the PRIS is related to an effective temperature via $T \sim \phi^2$. Thus, one should compare SA ran with a geometric factor $\lambda$ to PRIS-A with a geometric factor $\sqrt{\lambda}$.
\item[$\triangleright$] In the PRIS-A algorithm, the eigenvalue dropout level $\alpha$ is also a degree of freedom. One could thus, in principle, also simulate the annealing of the eigenvalue dropout level, thus affecting the ground state search dimensionality. A comprehensive study of this new class of algorithms goes beyond the scope of this work. 
\item[$\triangleright$] For given initial and final noise levels/temperatures and geometric factors, one can determine the number of temperature increments $N_\text{alg, iter}$ with the formula:
\begin{equation}
\label{eq:N_alg_iter}
N_\text{alg, iter} = \frac{\log \phi_f - \log \phi_i }{\log \lambda}.
\end{equation}
\item[$\triangleright$] $\lambda$ is typically chosen to be smaller but close to 1, in order to mimic adiabatic temperature variations. We verify that both for SA and PRIS-A, $\lambda > 0.98$ yields consistently low energy ground states, with no particular amelioration when increasing $\lambda$ (and scaling the number of temperature increments $N_\text{alg, iter}$).
\end{enumerate}

The initial and final noise levels chosen for PRIS-A are $\phi_i = 50$ and $\phi_f = 0.1$. The initial and final temperatures chosen for SA are $T_i = (2 * 0.5877 * \phi_i)^2 \sim 3454$ and $T_f =  (2 * 0.5877 * \phi_f)^2 \sim 0.01$. We observed no significant variation on the minimum ground state energy found for $\lambda > 0.99$ and ran each algorithms with a rate of $\lambda = 0.991$. The performance of the various algorithms we implement is shown in Table~\ref{pris_perf}. MH yields lower ground state energies than PRIS on average of 0.53\%. However, PRIS-A can outperform PRIS by a similar amount, lowering energies on average of 0.46\%. This is a larger performance enhancement than SA to MH (0.11\% decrease of ground state energy). Then, on average, SA outperforms PRIS-A by 0.18\%. We expect optimization of the eigenvalue dropout level $\alpha$ (and simulated annealing on this parameter) to further enhance the performance of PRIS and PRIS-A.

\newpage

\section{Supplementary Note 7: Implementation of the PRIS on FPGA}

\subsection{Architecture}

The PRIS algorithm is primarily designed for future photonic implementations. However, both photonic chips and FPGAs (Field Programmable Gate Arrays) share parallel processing capability. Thus, it is worthwhile to first demonstrate the performance of the algorithm on an FPGA board.

We have implemented the algorithm on Xilinx Zynq UltraScale+ multiprocessor system-on-chip (MPSoC) ZCU104 evaluation board based on the Pynq framework. The high-level synthesis, place, and route have been performed by Xilinx Vivado 2018.3 design suite. Zynq and Zynq Ultrascale+ devices integrate a multi-core processor (ARM Cortex-A9) and programmable logic (FPGA) in the same circuitry. PYNQ is an open-source project released by Xilinx that enables Python Productivity for Zynq devices. It incorporates the open-source Jupyter notebook infrastructure to run an Interactive Python (IPython) kernel and a web server directly on the ARM processor of the Zynq device. It also provides extensive hardware libraries (overlays) and APIs which enable easier and faster programming of FPGA. 

For our application, the preprocessing and preparation of data is performed in a Jupyter notebook using Python. Once ready, we transmit the data and write to memories on FPGA through AXI interface. After running the recurrent algorithm for a number of cycles on FPGA, the final state vector is transmitted back for data analysis like energy calculation and correctness verification. In this manner, the same hardware configuration could be easily reconfigured and run different problems efficiently. Noise generation and post-processing (energy calculation, and more generally extracting observable) could also be run directly on the FPGA in future versions of this implementation. 

The overall high level architecture is shown in Figure \ref{architecture}. We make use of the address-mapped AXI interface to encode different operations (e.g. Block RAM (BRAM) addressing, loop setting, result selection) into different addresses. AXI controller loads matrix, noise and threshold data into BRAMs, then sets up the initial state vector and starts the computation. The loop module is a finite-state-machine which reads the data from BRAM, adds noise, compares with threshold and then updates the state vector at every loop step. The final state vector and clock information is transmitted back through AXI. 

\begin{figure}
\begin{center}
\includegraphics[scale=0.4]{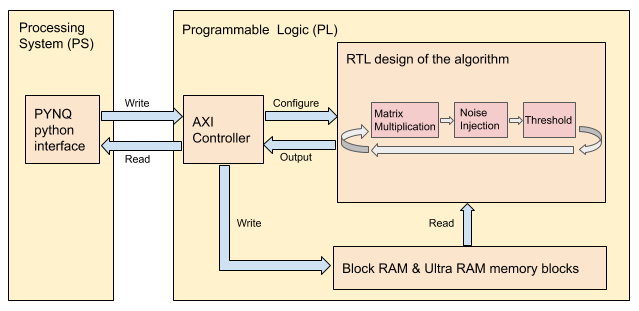}
\caption{\textbf{High level architecture of the design.}}
\label{architecture}
\end{center}
\end{figure}

We emphasize the following considerations regarding our implementation: 
\begin{enumerate}
\item[$\triangleright$] The bottleneck of the performance implementation is to read input matrix data from memory. The ZCU104 board has 312 BRAM blocks, with each block transmitting a maximum of 72 bits per clock cycle when configured under true-dual-port mode. It also has 94 UltraRAM (URAM) blocks with each block transmitting a maximum of 144 bits per clock cycle. Thus the total maximum data rate it could read from BRAM and URAM is 36288 bits per clock cycle. For our current implementation ($N<100$) we could assume we have enough memory, but as the problem size increases this becomes a major limitation.
\item[$\triangleright$] We multiply an $N$-by-$N$ $b$ bit matrix ($b$ being the bits we choose to represent the Ising problem data) with an $N$-by-$1$ $1$-bit state vector, so each data in the result is a conditional sum of one row of matrix $C$ based on the value of the state vector. We choose to implement a binary tree for speeding up the sum as shown in Figure \ref{bintree}. At each time step, a certain number of rows of the matrix is loaded onto the leaves of the tree, and then gets propagated to next level based on the value of the state vector, and adds all up thereafter. The result of the sum should propagate to the root after a delay of clock cycles equal to the height of the tree. 
\end{enumerate}

\begin{figure}
\begin{center}
\includegraphics[scale=0.65]{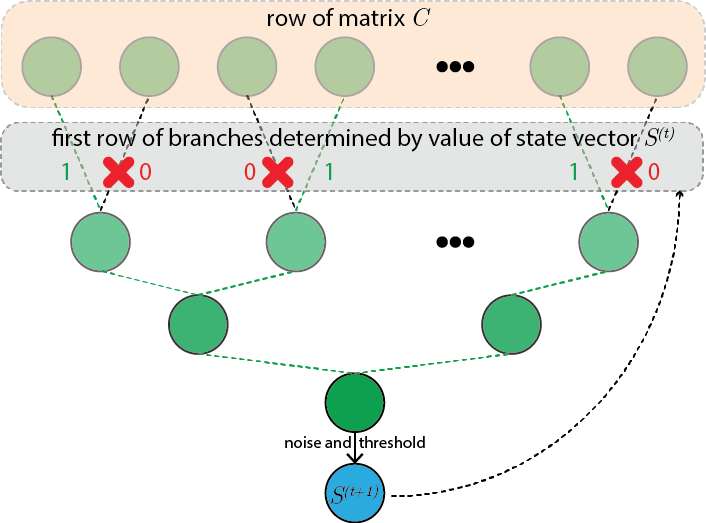}
\caption{\textbf{Binary tree architecture for matrix-to-vector multiplication.} A row of matrix C is read from memory to the top leaves of the binary tree. The branch connecting the top leaves to the next tree layer are either active or not, depending on the value of the current state vector (0 or 1).}
\label{bintree}
\end{center}
\end{figure}

The loop is carefully pipelined, in which reading the matrix takes $\frac{N}{2b}$ clock cycles, multiplication (binary tree addition) takes $\log_2 N$ cycles, while noise injection and thresholding are done in a single clock cycle. Details of the complexity is discussed in the next session.
We are only running integer arithmetic on the FPGA, so to adapt the original algorithm which runs on float points, we need to scale the parameters (matrix, noise, and threshold) and then round them to integers. The scaling factor and bit length to represent each data needs to be carefully designed.

Note that the only difference between our implemented algorithm on the FPGA and the ideal algorithm is that we round our values to a specific bit depth. Specifically, we store the matrix values, noise, and threshold values only to a specific bit precision. This originates from memory constraints on the FPGA, but it is reasonable to suspect that this rounding affects the performance of the algorithm. To test the effects of this change, we implemented both our original algorithm and the version with rounding on MATLAB, testing various bit depth with varying input, as shown in Figure \ref{comparison}. We found that, for a matrix of size $N = 100$ by 100, rounding after multiplying by 64 or even 32 performed just as well as using arbitrary bit precision (Matlab's default double-precision floating point). Specifically, for a variety of matrix inputs, on average using 64 and 32 roundings reached states of essentially equivalent energy to the arbitrary precision. Over 1000 trials each with a different input matrix, 1000 iterations each (i.e. 1000 matrix multiplications and noise additions and thresholding for each trial), on average our normal algorithm reached a minimum energy of -408.63, whereas the 32 rounding reached -409.08, and the 32 bit reached -409.17. 

\begin{figure}
\begin{center}
\includegraphics[scale=0.4]{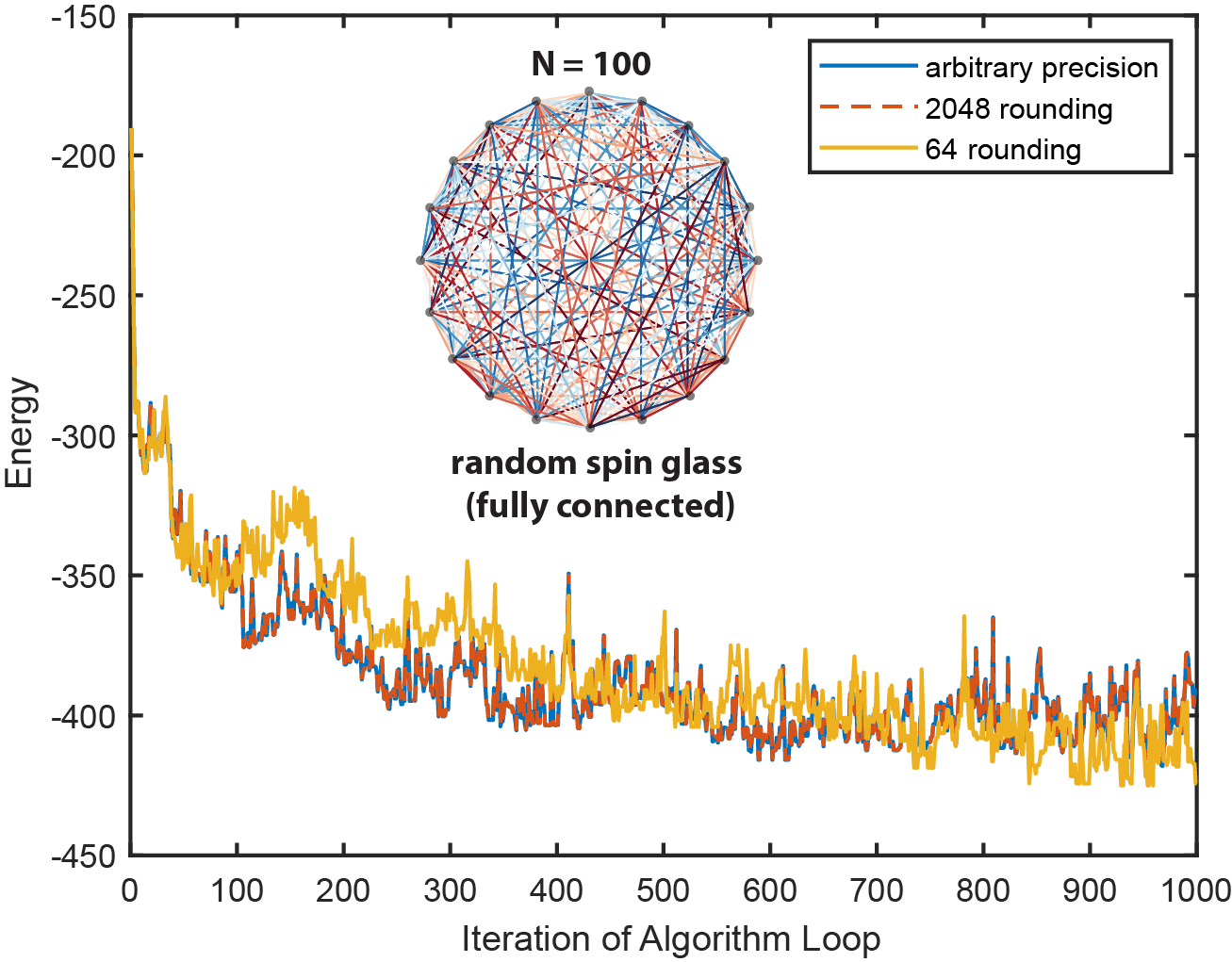}
\caption{\textbf{Performance comparison for various rounding values.}}
\label{comparison}
\end{center}
\end{figure}

\subsection{Complexity}

In discussing the complexity of our implementation, we must note the dependence on the particular resources of a given FPGA. Here we will discuss the complexity of a single step of our algorithm, consisting of a single matrix multiplication followed by noise addition and threshold comparison. Since the clock speed of any FPGA is variable, we give the complexity in terms of clock cycles.

We focus on computing the complexity of the matrix multiplication, since it dominates the three steps. To perform this matrix multiplication, we must read the entire matrix from memory and feed it to a system of binary trees. Suppose we store each matrix entry using $b$ bits, and that we are working with an $N \times N$ size matrix. In our FPGA, we can read 64 bits from each BRAM block per clock cycle, so one limiting factor is the total number $k$ of BRAM blocks we can use concurrently. Note that we are also limited by the maximum number $R$ of bits that can be held in registers, or the working memory, of our FPGA. This allows us to compute a complexity bound.

Assuming the FPGA can read $M$ bits from each BRAM block per clock cycle, it will take the FPGA at least $\frac{N^2b}{M}$ cycles to read the entire matrix. According to the above considerations, $M =64k + 144u$, where $k$ (resp. $u$) is the number of used BRAM (resp. URAM). In our current implementation, we only used BRAM memory so $u=0$. Alternatively, we are also limited by the amount of data we can work with at a time, $R$. To complete this matrix multiplication, we must store a binary tree corresponding to each row. The result of these binary trees is a vector of $N$ values to be sent to the noise addition step. Thus the total amount of data we will need to work with in this computation is $2N^2b$ bits, which will take at least $\frac{2N^2b}{R}$ cycles to pass through the registers of the FPGA.

To finish the analysis, note that we may feed new data into the binary trees before the original data has fully propagated through. Thus the only computation time outside of that required to load the matrix is the time for the final data to propagate through the binary trees. The size of our binary trees can be determined by considering the minimum of the amount of data we can load at a time, $M = 64k$, and the amount of data (in bits) we can work with at a time, $R$. Note that we never need a binary tree larger than the size of a row of the matrix, and that the binary tree takes up twice the space of the data inputted into it. Then the size of our binary tree is $B:=\min(\frac{64k}{b},\frac{R}{2b},N)$, and it takes on the order of $\log_2 B$ cycles to propagate through. Then, incorporating the time to load the matrix, the total number of cycles for a single algorithm step is on the order of 
\[ \text{number of clock cycles per PRIS algorithm iteration} = \max \left( \frac{2N^2b}{R},\frac{N^2b}{64k} \right) + \log_2 B.\]

Lastly, considering noise addition and thresholding, note that this only involves reading $2Nb$ more bits from bram. The actions of noise addition and thresholding themselves only take two clock cycles, and the time and space necessary to deal with these extra $2Nb$ bit is negligible in complexity compared to that necessary to deal with the $N^2b$ bits from the matrix. So these two steps do not change the overall complexity. We also need to load values into the BRAM in the first place, but we do not include this process in the analysis. Thus the overall complexity of a single iteration of our algorithm is $O\left(\max(\frac{2N^2b}{R},\frac{N^2b}{64k}) + \log_2 B \right)$.

The discussion above is under the assumption that for large size problems ($N>100$), board registers are not enough for holding all the data in the matrix. For small size problem, we could alternatively use registers and do the multiplication in one clock cycle, then the major time consumption would be the binary tree addition, which took $\log_2 B$ clock cycles. 

For problems with larger size, we store matrix in BRAM as mentioned above. We could plot the complexity (Figure \ref{complexity}) discussed above under different assumptions: (1) Using all of the memory blocks on the FPGA (Figures Figure \ref{complexity}(a) and (b)), which is optimal but takes time to implement for every problem size. (2) For most reasonably-sized problems, we can assume a linear scaling of memory blocks, i.e. $M = r N b$, which is sub-optimal for a given $N$ but easier to reconfigure for various $N$ allowing this approximation (we can switch between different problem sizes $N$ by changing the width of BRAM IPs in our design). We also assumed that $R$ is large, thus not being the limiting factor of the computation. 

\begin{figure}
\begin{center}
\includegraphics[scale=0.4]{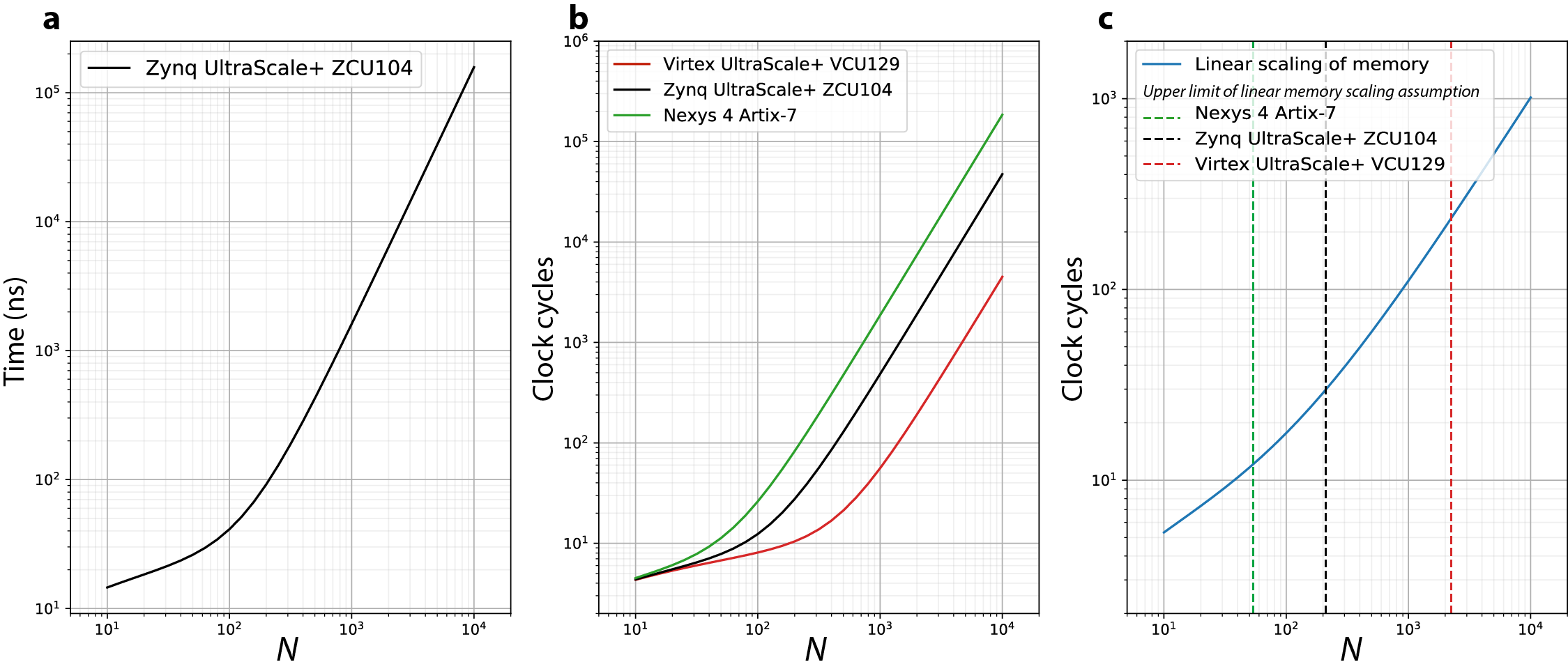}
\caption{\textbf{Complexity estimate of a PRIS single algorithm iteration, implemented on FPGA. a:} Graph showing the estimated time consumption on the current ZCU104 board, given that we’re using all of the memory blocks and under a clock of 300MHz and 16 bit encoding. \textbf{b:} Comparison of three different FPGA boards with different memory resources. \textbf{c:} Assume linear scaling of memory, i.e. $M = r N b$, here we let $r = 10$ (read 10 rows of matrix at a time). The vertical lines indicate the limit of each FPGA. For our ZCU104 board, the maximum problem size under the linear scaling assumption is $N=212$.}
\label{complexity}
\end{center}
\end{figure}

\subsection{Experiment}

In this experiment, we tested the correctness and runtime for a problem size of $N = 100$ under a 300 MHz clock. We used 16 bits to encode each data in matrix, noise and threshold. A total of $k=5$ BRAM IPs are instantiated in the design, each with data length of 1600 bits, corresponding to one row of the matrix. (In this design we are using 71\% of BRAM blocks on our board.)

The experiment runs as follows:
\begin{enumerate}
\item[$\triangleright$] Original Test: First we generate a test case running on float numbers in Jupyter notebook using Python. We run the test for 200 clock cycles and record the state vector and energy of each iteration. 
\item[$\triangleright$] Scale the problem: We scale up the matrix, noise and threshold by a factor of 128 and round to integers. We run the scaled test for 200 clock cycles and plot the energy array together with the original test to check the accuracy of the algorithm with rounded values. Several scaling factor and energy arrays are shown in Figure \ref{comparison}, in which we could see that scaling under 64 shows some deviation to the unrounded algorithm, and scaling above 1024 results in exactly the same sequence of state vectors.
\item[$\triangleright$] Correctness verification: we load the scaled test onto FPGA, and run 200 clock cycles and output state vector at each time step, with which we calculate the energy in Python. The FPGA-simulated energy and Python calculated energy is exactly the same, confirming the correctness of the implementation. 
\item[$\triangleright$] Timing and complexity analysis: for this part we only output the state vector and clock count at the very end of 200 clock cycles. For our problem of size 100 by 100, a loop of 200 iterations cost 3803 cycles. On average each time step takes 19 clock cycles (the remaining 3 clock cycles are due to some overhead at the beginning of the run), which is in accordance with our analysis before. 
\end{enumerate}

\begin{figure}
\begin{center}
\includegraphics[scale=0.3]{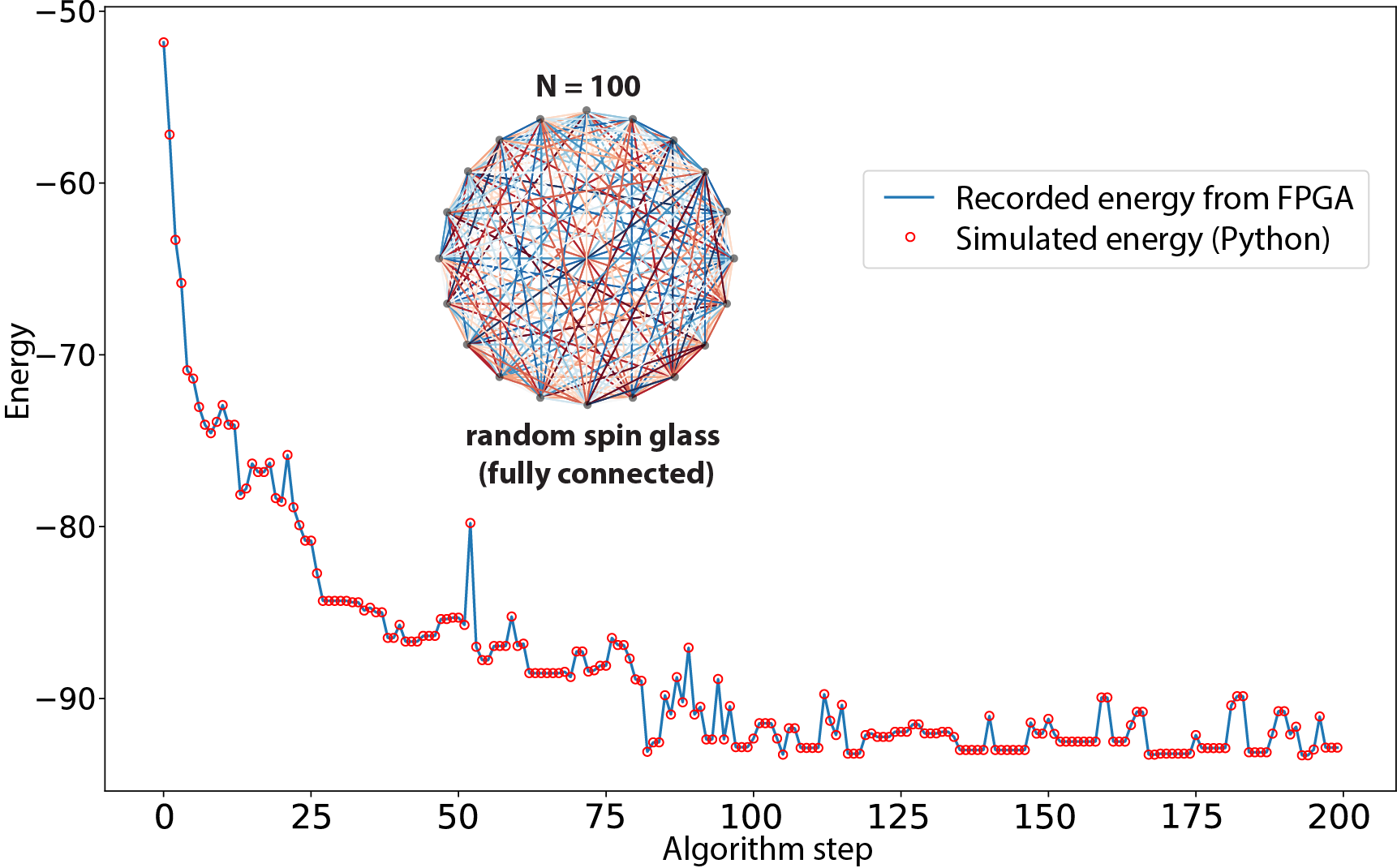}
\caption{\textbf{Comparison of FPGA and simulated (Python) outputs for a given random spin graph with $N=100$.}}
\end{center}
\end{figure}

The measured clock cycle per algorithm step is 19 cycles, which is exactly as expected: 19 = 10 + 8 + 1 in which 10 is the cycles it takes to read the whole matrix (we are reading $r=10$ rows at each clock cycle), 8 is the time for the binary addition tree, and an additional clock is required for noise injection and thresholding. In conclusion, we have implemented the PRIS for problem size $N=100$ on an FPGA platform with a time per algorithm step of approximately 63 ns. 

\color{black}

\newpage 
\bibliographystyle{ieeetr}
\bibliography{../ising}
\end{document}